\newtheorem{Thm}{Theorem}
\newtheorem{Lem}[Thm]{Lemma}
\newtheorem{Cor}[Thm]{Corollary}
\newtheorem{Prop}[Thm]{Proposition}
\newtheorem{Def}{Definition}
\newtheorem{Fact}[Thm]{Fact}
\newcommand\h{\mathcal{H}}
\newcommand\ket[1]{| #1 \rangle}
\newcommand\bra[1]{\langle #1 |}
\newcommand\mbR{\mbox{$\mathbb{R}$}}
\newcommand\mbC{\mbox{$\mathbb{C}$}}
\newcommand\mcX{\mathcal{X}}
\newcommand\size{\mbox{\tt {size}}\xspace}
\newcommand\rank{\mbox{\tt {rank}}\xspace}
\newcommand\srank{\mbox{\tt {S-rank}}\xspace}
\newcommand\prank[1]{\mbox{\tt {rank}$^{(#1)}_{\tt psd}$}\xspace}
\newcommand\pranktwo{\mbox{\tt {rank}$_{\tt psd}$}\xspace}
\newcommand\apprank[1]{\mbox{\tt {rank}$_{{\tt psd},#1}$}\xspace}
\newcommand\defeq{\stackrel{\mathrm{\scriptsize def}}{=}}
\newcommand\tr{\mbox{\tt {tr}}\xspace}
\newcommand\qip[2]{\langle #1 | #2 \rangle}
\newcommand\F{\mbox{\sf {F}}\xspace}
\newcommand\alice{\mbox{\sf Alice}\xspace}
\newcommand\bob{\mbox{\sf Bob}\xspace}
\newcommand\charlie{\mbox{\sf Charlie}\xspace}
\newcommand\qcomm{\mbox{\sf {QComm}}\xspace}
\newcommand\rcomm{\mbox{\sf {RComm}}\xspace}
\newcommand\qcorr{\mbox{\sf {QCorr}}\xspace}
\newcommand\m{\mbox{\sf {M}}\xspace}
\newcommand {\st} {\textit{s.t.}\xspace}
\newcommand {\ie} {\textit{i.e.}\xspace}
\begin{document}

\title{Multipartite Quantum Correlation and Communication Complexities}
\author{Rahul Jain\thanks{Department of Computer Science and Centre for Quantum Technologies, National University of Singapore, Singapore. Email: rahul@comp.nus.edu.sg}  \\
\and
Zhaohui Wei\thanks{School of Physics and Mathematical Sciences, Nanyang Technological University and Centre for Quantum Technologies, Singapore. Email: weizhaohui@gmail.com} \\
\and
Penghui Yao\thanks{Centre for Quantum Technologies, National University of Singapore, Singapore. Email: phyao1985@gmail.com} \\
\and Shengyu Zhang\thanks{Department of Computer Science and Engineering, and The Institute of Theoretical Computer Science and Communications, The Chinese University of Hong Kong. Email: syzhang@cse.cuhk.edu.hk}}

\date{}

\maketitle

\begin{abstract}
\small%\baselineskip=9pt
The concepts of quantum correlation complexity and quantum communication complexity were recently proposed to quantify the minimum amount of resources needed in generating bipartite classical or quantum states in the single-shot setting. % by Zhang ({\em Proc. 3rd Innov Theor Comput., pp. 39-59, 2012}).
The former is the minimum size of the initially shared state
$\sigma$ on which local operations by the two parties (without
communication) can generate the target state $\rho$, and the latter
is the minimum amount of communication needed when initially sharing
nothing. In this paper, we generalize these two concepts to
multipartite cases, for both exact and approximate state generation.
Our results are summarized as follows.
\begin{enumerate}
  \item For multipartite pure states, the correlation complexity can be completely characterized by local ranks of sybsystems.

    \item We extend the notion of PSD-rank of matrices to that of tensors, and use it to bound the quantum correlation complexity for generating multipartite classical distributions.

    \item For generating multipartite mixed quantum states, communication complexity is not always equal to correlation complexity (as opposed to bipartite case). But they differ by at most a factor of 2.
  Generating a multipartite mixed quantum state has the same communication complexity as generating its optimal purification. But for correlation complexity of these two tasks can be different (though still related by less than a factor of 2).

  \item To generate a bipartite classical distribution $P(x,y)$ approximately, the quantum communication complexity is completely characterized by the approximate PSD-rank of $P$. The quantum correlation complexity of approximately generating multipartite pure states is bounded by approximate local ranks. %This result is obtained based on the fact that the cost to approximate a bipartite quantum state equals that to approximate its exact purifications, which also implies a result for a general bipartite quantum state.

\end{enumerate}
\end{abstract}

\pagestyle{plain} \pagenumbering{arabic}

\section{Introduction}

Shared randomness and quantum entanglement among parties located at
different places are important resources for various
distributed information processing tasks. How to generate these shared resources has been one of the most important issues, and recently much attention has been paid to the minimum amount of shared correlation and communication needed to generate bipartite classical and quantum states in one-shot setting
\cite{ASTS+03,HJMR09,FMP+12,Zha12,JSWZ13}. In particular, in \cite{JSWZ13} the worst-case costs of several single-shot bipartite schemes to generate correlations and quantum entanglement have been characterized. %, and the specific problems they considered and their results are as follows.
The setting is as follows. Suppose that two parties, \alice and
\bob, need to generate correlated random variables $X$ and $Y$, with
\alice outputting $X$ and \bob outputting $Y$, such that $(X,Y)$ is
distributed according to a target distribution $P$. If $P$ is not a
product distribution, \alice and \bob could generate $P$ by sharing
an initial seed distribution $(X',Y')$, \alice owning $X'$ and \bob
owning $Y'$, and then each performing local operations on their own
part. The minimal size of this seed correlation $(X',Y')$ is defined
as {\em randomized correlation complexity} \cite{Zha12}, denoted
$R(P)$, where the {\em size} of a bipartite distribution is defined
as the half of the total number of bits. It has been known that
$R(P)$ is fully characterized as $\lceil \log_2 \rank_+(P)\rceil$
\cite{Zha12}, where $\rank_+(P)$ is the nonnegative rank of matrix
$P$\footnote{A bipartite distribution $P$ is also natural a matrix
$[P(x,y)]_{x,y}$, and we thus use $P$ for both the distribution and
the matrix.}, a measure in linear algebra with numerous applications
in combinatorial optimization \cite{Yan88}, nondeterministic
communication complexity \cite{LL90}, algebraic complexity theory
\cite{Nis91}, and many other fields \cite{CP05}. The problem becomes
even more interesting when quantum operations are allowed: \alice
and \bob share a seed quantum state $\sigma$ and perform local
quantum operations to generate a distributed classical distribution
$P$. In this case, the minimal size of the seed quantum state
$\sigma$ is defined as {\em quantum correlation complexity}, denoted
$\qcorr(P)$, where the {\em size} of a bipartite quantum state is
the half of the total number of qubits. One of the main results of
\cite{JSWZ13} is that $\qcorr(P)$ could be completely characterized
as $\lceil \log_2 \pranktwo(P)\rceil$, where $\pranktwo(P)$ is the
PSD-rank of $P$, a concept recently proposed by Fiorini {\em et al.}
in studies of the minimum size of extended formulations of
optimization problems such as TSP \cite{FMP+12}. Since $\rank_+(P)$
could be much larger than $\pranktwo(P)$, this implies a potentially
huge advantage of using quantum operations, over classical
counterparts, to generate classical distributions.

More generally, the target state can be a quantum state $\rho$, and \cite{JSWZ13} gave a complete
characterization for the minimum size of the seed state to generate $\rho$. In particular, if $\rho$ is a pure state
$\ket{\psi}\bra{\psi}$ and an $\epsilon$-approximation is allowed for generating $\ket{\psi}\bra{\psi}$, then the correlation complexity is completely characterized by the
$(1-\epsilon)^2$-cutoff point of the Schmidt coefficients of $\ket{\psi}$, closing a possibly exponential
gap left in \cite{ASTS+03}.

The above discussion assumes that \alice and \bob perform local operations on a shared state. Actually, \alice and \bob could replace the shared states discussed above by communication. In this case,
the minimal amount of communication in classical and quantum protocols for generating target classical distribution $P$
are defined as the {\em randomized} and {\em quantum communication complexities}, denoted by
$\rcomm(P)$ and $\qcomm(P)$, respectively \cite{Zha12}; one can similarly define $\qcomm(\rho)$ for generating quantum states $\rho$. We have introduced correlation complexity and communication complexity. An interesting fact is that when only two parties are involved, these two measures are always the same, and this is true for both classical and quantum settings \cite{Zha12}.

%For convenience, in this case we let $\Q(P)=\qcorr(P)=\qcomm(P)$,
%and here $P$ could also be replaced by a quantum state $\rho$.

Capturing the minimum cost to generate target states, the concepts
of correlation complexity and communication complexity are
fundamental parameters of the shared states as a resource. In
particular, when the target state is quantum, the resource is
entanglement, arguably the most important shared resource in almost
all quantum information processing tasks. While bipartite
entanglement is well understood, multipartite entanglement has been
elusive on many levels, and considerable efforts have been made to
study it from various angles. In this paper, we extend the study of
correlation and communication complexity of generating a classical
correlation and quantum entanglement to multipartite cases. Our
results are summarized next, and we hope that they can shed light on
multipartite entanglement from another fundamental perspective.

\subsection{Multipartite quantum correlation complexity}\label{intro:3}

For multipartite cases, it turns out that quantum correlation
complexity and quantum communication complexity are not equivalent
any more, thus we have to deal with them separately. We first
consider quantum correlation complexity of generating a $k$-partite state.
\begin{Def}\label{def:qcc}
Suppose $k$ parties, $A_1$, $A_2$, ..., $A_k$, share a seed state
$\sigma$, and they aim to generate a target state $\rho$ by each perform some operation on her own part of $\sigma$. Then the quantum correlation complexity of $\rho$, denoted $\qcorr(\rho)$, is the minimal size of $\sigma$ such that
local quantum operations on $\sigma$ can generate $\rho$. Here the size of $\sigma$ is defined as $\sum_{i=1}^kn_i$, where $n_i$ is the number of qubits of $\sigma$ held by $A_i$.
\end{Def}

Let us first consider the case of $\rho$ being a pure state. For a bipartite pure state
$\ket{\psi}$, Schmidt decompositions help us to characterize
$\qcorr(\ket{\psi})$ and $\qcomm(\ket{\psi})$ perfectly, but
multipartite pure states do not have Schmidt decompositions in general. It turns out that the quantum correlation complexity is the sum of the ``marginal complexity''.
\begin{Def}
    Suppose $\ket{\psi}$ is a pure state in $\h_{1}\otimes \cdots \otimes\h_{k}$, and $\rho_{j}$ is the reduced density matrices of $\ket{\psi}$ in $\h_{j}$. Define the marginal complexity of $\ket{\psi}$ as
\begin{align*}
   \m(\ket{\psi}) = \sum_{j=1}^k \big\lceil \log_2 \rank(\rho_{j})\big\rceil.
\end{align*}
\end{Def}

\newtheorem*{Thm-qccforpure}{Theorem~\ref{thm:qccforpure}}
\begin{Thm}\label{thm:qccforpure}
Suppose $\ket{\psi}$ is a pure state in $\h_{1}\otimes \cdots \otimes\h_{k}$, and $\rho_{j}$ is the reduced density matrices of $\ket{\psi}$ in $\h_{j}$. Then
\begin{align*}
   \qcorr(\ket{\psi}) = \m(\ket{\psi}).
\end{align*}
\end{Thm}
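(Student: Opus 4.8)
The plan is to prove the two inequalities $\qcorr(\ket{\psi}) \ge \m(\ket{\psi})$ and $\qcorr(\ket{\psi}) \le \m(\ket{\psi})$ separately, where the first is the substantive direction and the second is a direct construction. Write $r_j = \rank(\rho_j)$ throughout.

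For the lower bound I would start from an arbitrary optimal protocol: a seed $\sigma$ on $A_1\cdots A_k$ in which $A_j$ holds $n_j$ qubits, together with local channels $\Phi_j$ whose joint action produces $\ket{\psi}\bra{\psi}$. The first step is to dilate everything. Take a purification $\ket{\Phi}$ of $\sigma$ on $A_1\cdots A_k R$, with $R$ a reference system held by nobody, and realize each $\Phi_j$ by a Stinespring isometry $V_j$ from $A_j$ into $\h_j\otimes E_j$. Applying $V_1\otimes\cdots\otimes V_k\otimes I_R$ to $\ket{\Phi}$ yields a pure global state whose reduced state on $\h_1\cdots\h_k$ is the target $\ket{\psi}\bra{\psi}$. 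The key observation is that tracing out the environments $E_1\cdots E_k R$ returns a \emph{pure} state, which forces the global state to factor as $\ket{\psi}\otimes\ket{\theta}$ with $\ket{\theta}$ living on $E_1\cdots E_k R$. Now I would compute the reduced state on $\h_j\otimes E_j$ in two ways: on one hand it equals $V_j\sigma_j V_j^\dagger$, where $\sigma_j$ is the reduced seed on $A_j$, so its rank is $\rank(\sigma_j)\le 2^{n_j}$ because $V_j$ is an isometry; on the other hand the factorization makes it $\rho_j\otimes\theta_j$ (with $\theta_j$ the marginal of $\ket{\theta}$ on $E_j$), whose rank is at least $r_j$. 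Hence $r_j\le 2^{n_j}$, i.e. $n_j\ge\lceil\log_2 r_j\rceil$, and summing over $j$ gives $\sum_j n_j\ge\m(\ket{\psi})$.

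For the upper bound I would exhibit an explicit protocol of size exactly $\m(\ket{\psi})$. Set $n_j=\lceil\log_2 r_j\rceil$ and fix an orthonormal basis $\{\ket{e^j_1},\dots,\ket{e^j_{r_j}}\}$ of the support of $\rho_j$. Since $\ket{\psi}$ lies in $\bigotimes_j \mathrm{supp}(\rho_j)$, it can be written $\ket{\psi}=\sum c_{i_1\cdots i_k}\,\ket{e^1_{i_1}}\cdots\ket{e^k_{i_k}}$. I would take the seed to be the ``compressed'' pure state $\ket{\tilde\psi}=\sum c_{i_1\cdots i_k}\,\ket{i_1}\cdots\ket{i_k}$, where $A_j$ holds the $n_j$ qubits encoding the index $i_j\in\{1,\dots,r_j\}$ (this fits since $2^{n_j}\ge r_j$); its size is $\sum_j n_j=\m(\ket{\psi})$. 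Each party then applies the local isometry $V_j:\ket{i}\mapsto\ket{e^j_i}$, completed arbitrarily to a genuine isometry into $\h_j$ (with a discarded ancilla if necessary), and $V_1\otimes\cdots\otimes V_k$ maps $\ket{\tilde\psi}$ to $\ket{\psi}$ exactly. This gives $\qcorr(\ket{\psi})\le\m(\ket{\psi})$.

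The main obstacle is the lower bound, and within it the point demanding care is handling an arbitrary \emph{mixed} seed: a local channel can increase the rank of a party's reduced state, so one cannot simply argue that rank is non-increasing under $\Phi_j$. The purification-and-factorization step is precisely what circumvents this, since it is the purity of the target output, propagated backward through the Stinespring isometries, that pins down $\rank(\sigma_j)\ge r_j$. I would want to verify carefully that the reference $R$ can be attached to the dilation without being charged to any party's cost, and that the factorization principle (a pure state after partial trace is a product across the cut) is applied to the correct bipartition $\h_1\cdots\h_k$ versus $E_1\cdots E_k R$.
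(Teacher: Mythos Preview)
Your proposal is correct. The upper bound construction is identical to the paper's.

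For the lower bound there is a small but genuine difference in route. The paper first observes that a \emph{pure} target can always be produced from a \emph{pure} seed: if a mixed seed $\sigma=\sum_i p_i\ket{\sigma_i}\bra{\sigma_i}$ is mapped by the local channels to the extremal state $\ket{\psi}\bra{\psi}$, then each $\ket{\sigma_i}$ already maps to $\ket{\psi}$. With the seed pure, the paper simply invokes the standard fact that local operations cannot increase the Schmidt rank across any cut $(A_j,A_{-j})$, giving $s_j\ge r_j$ in one line. You instead keep the seed mixed, purify it with an external reference $R$, and run the Stinespring-plus-factorization argument explicitly; this is precisely the content behind the ``local operations don't increase Schmidt rank'' fact, so your argument is the spelled-out version of what the paper cites. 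The paper's extremality reduction buys brevity (no reference system, no tensor factor $\theta_j$ to track); your approach buys self-containment and, as you note, makes transparent why the naive ``ranks don't increase under channels'' intuition fails and why purity of the output is essential.
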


\medskip For a mixed quantum state $\rho$, however, the correlation complexity is less clear. It was
mentioned that in the bipartite case, $\qcorr(\rho)$ is exactly the minimal $\qcorr(\ket{\psi})$ over all purifications $\ket{\psi}$ of $\rho$ \cite{JSWZ13}. This turns out to be not the case any more in multipartite setting.

\newtheorem*{Thm-qcc_intro}{Theorem~\ref{thm:qcc_intro}}
\begin{Thm}\label{thm:qcc_intro}
Assume that $\rho$ is a quantum state in $\bigotimes_{i=1}^k \h_{i}$. Then we have
\begin{align*}
    \qcorr(\rho)\leq r(\rho) \leq \Big(2-\frac{2}{k}\Big)\qcorr(\rho),
\end{align*}
where $r(\rho)$ is the minimum $\qcorr(\ket{\psi})$ over all purifications $\ket{\psi}$ of $\rho$.
\end{Thm}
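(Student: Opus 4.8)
The plan is to prove the two inequalities separately. The left inequality $\qcorr(\rho)\le r(\rho)$ is the easy direction: I would take any purification $\ket{\psi}$ of $\rho$ living on $\bigotimes_i(\h_i\otimes\h_i^{\mathrm{anc}})$ that achieves $\qcorr(\ket{\psi})=r(\rho)$, generate $\ket{\psi}$ from its optimal seed, and then have each party trace out its ancilla register $\h_i^{\mathrm{anc}}$. Since discarding a subsystem is a local quantum operation, this produces $\rho$ from the same seed, so $\qcorr(\rho)\le\qcorr(\ket{\psi})=r(\rho)$.

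For the right inequality I would start from an optimal correlation scheme for $\rho$: a seed state $\sigma$ on $\bigotimes_i\h_i^{(0)}$ with $\dim\h_i^{(0)}=2^{n_i}$ and $\sum_i n_i=\qcorr(\rho)$, together with local maps $\Phi_i$ satisfying $(\bigotimes_i\Phi_i)(\sigma)=\rho$. The idea is to purify this whole process. First I would dilate each $\Phi_i$ to a Stinespring isometry $V_i:\h_i^{(0)}\to\h_i\otimes E_i$. Next I would purify the seed $\sigma$, and here is the crucial choice: I assign the entire reference register $\h^R$ (of dimension $\rank(\sigma)\le 2^{\sum_i n_i}$) to a single party $i^*$, chosen to be the one with the largest $n_{i^*}$. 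Applying $\bigotimes_i V_i$ to this purification yields a pure state $\ket{\psi}$ which, after tracing out all environments $E_i$ together with $\h^R$, returns $\rho$; hence $\ket{\psi}$ purifies $\rho$, and by Theorem~\ref{thm:qccforpure} its correlation complexity equals its marginal complexity $\m(\ket{\psi})$.

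The heart of the argument is bounding the marginal ranks. Because each $V_i$ is an isometry, the rank of the reduced state of $\ket{\psi}$ on party $i$ equals the rank of the reduced state of the seed purification on party $i$'s input register, which in turn equals the rank on the complementary registers. For every party $j\ne i^*$ the input register is just $\h_j^{(0)}$, so $\big\lceil\log_2\rank(\cdot)\big\rceil\le n_j$. For the special party $i^*$, even though it holds the large reference register, its marginal rank is controlled by the complement, namely the product of the other parties' input spaces of dimension $2^{\sum_{j\ne i^*}n_j}$, giving $\big\lceil\log_2\rank(\cdot)\big\rceil\le\sum_{j\ne i^*}n_j$. Summing yields $\m(\ket{\psi})\le 2\sum_{j\ne i^*}n_j=2\big(\qcorr(\rho)-n_{i^*}\big)$, and since $n_{i^*}=\max_i n_i\ge\qcorr(\rho)/k$ this is at most $(2-2/k)\,\qcorr(\rho)$. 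As $r(\rho)\le\m(\ket{\psi})$, the claim follows.

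I expect the main obstacle to be obtaining the factor $2-2/k$ rather than the naive factor $2$. A symmetric purification that spreads the reference of $\sigma$ evenly across all parties only delivers $\m(\ket{\psi})\le 2\,\qcorr(\rho)$. The improvement rests on two observations: that the reference of $\sigma$ may be concentrated on one party without affecting whether $\ket{\psi}$ purifies $\rho$, and that any party's marginal rank is bounded by the dimension of the remaining parties' systems. Loading the reference onto the party with the largest local seed therefore costs nothing on that party while removing the maximal term $2\max_i n_i\ge 2\,\qcorr(\rho)/k$ from the sum, and verifying that this complementary-subsystem bound genuinely applies to the reference-holding party is the step I would treat most carefully.
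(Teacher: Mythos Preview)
Your proposal is correct and follows essentially the same route as the paper: take an optimal seed $\sigma$ of local sizes $n_i$, purify $\sigma$ with the reference register placed at the party with the largest $n_i$, dilate the local channels to isometries, and bound each party's marginal rank by the input dimension (or, for the reference-holding party, by the complementary dimension $2^{\sum_{j\ne i^*} n_j}$) to obtain $r(\rho)\le 2\sum_{j\ne i^*} n_j\le(2-2/k)\qcorr(\rho)$. The paper phrases the bound as $\qcorr(\ket{\psi})\le\qcorr(\ket{\theta})$ for the seed purification $\ket{\theta}$ and then bounds $\qcorr(\ket{\theta})$, while you bound $\m(\ket{\psi})$ directly using that isometries preserve marginal ranks; these are the same computation.
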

We will also show that both inequalities in the above theorem are
tight, thereby implying that $\qcorr(\rho)$ is indeed different from
$\min\{\qcorr(\ket{\psi}): \ket{\psi} \text{ purifies } \rho\}$.

\medskip
While in some sense pure quantum states contain the most
``quantumness'' in terms of superposition, the other extreme is
mixture of classical states, \ie, classical distributions. In the
bipartite case, the quantum correlation complexity of generating
distribution $P = [P(x,y)]_{x,y}$ is exactly $\lceil \log_2
\pranktwo(P)$. We will show an analogous result in multipartite
cases. To this end, we need to first generalize the notion of
PSD-rank from matrices to tensors. Similar to the bipartite case, a
$k$-partite probability distribution $P =
[P(x_1,x_2,...,x_k)]_{x_1,x_2,...,x_k}$ can also be viewed as a
tensor of dimension $k$.
\begin{Def}\label{def:psd}
For an entry-wise nonnegative tensor $P =
[P(x_1,...,x_k)]_{x_1,...,x_k}$ of dimension $k$, its PSD-rank
$\prank{k}(P)$ is the minimum $r$ \st there are $r\times r$ PSD
matrices $C_{x_1}^{(1)}, ..., C_{x_k}^{(k)}\succeq 0$ with
$P(x_1,...,x_k) = \sum_{i,j=1}^r C_{x_1}^{(1)}(i,j) \cdots
C_{x_k}^{(k)}(i,j)$.
\end{Def}
With this definition, we can bound the quantum correlation
complexity of $P$ in terms of its PSD-rank.
\newtheorem*{Thm-qccP}{Theorem~\ref{thm:qccP}}
\begin{Thm}\label{thm:qccP}
Suppose $P = [P(x_1,...,x_k)]_{x_1,...,x_k}$ is a probability distribution on $\mcX_1\times \cdots \times \mcX_k$. Then \begin{align*}
    \frac{k}{2k-2}\big\lceil\log_2\prank{k}(P)\big\rceil \leq \qcorr(P)\leq k\big\lceil\log_2\prank{k}(P)\big\rceil.
\end{align*}
\end{Thm}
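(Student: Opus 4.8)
The plan is to prove the two bounds by separate constructions, both organized around the generalized GHZ state $\ket{\Phi_r}=\frac{1}{\sqrt r}\sum_{i=1}^r\ket{i}^{\otimes k}$, whose defining feature is the identity $\bra{\Phi_r}\bigotimes_{l=1}^k C^{(l)}\ket{\Phi_r}=\frac1r\sum_{i,j=1}^r\prod_{l=1}^k C^{(l)}(i,j)$ for arbitrary operators $C^{(l)}$. This is exactly the algebraic form in Definition~\ref{def:psd}, so $\prank{k}(P)$ is the least GHZ-dimension from which local measurements produce $P$, and the theorem becomes a statement about converting between a GHZ seed and an arbitrary seed. For the upper bound, let $r=\prank{k}(P)$ with PSD factors $C^{(l)}_{x_l}$, and write $S_l=\sum_{x_l}C^{(l)}_{x_l}$ (restrict to its support so $S_l\succ0$, which only lowers $r$). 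I would take as seed the deformed state $\ket{\Phi}=\sqrt r\,\bigotimes_{l=1}^k S_l^{1/2}\ket{\Phi_r}$ and let party $l$ apply the measurement $\{S_l^{-1/2}C^{(l)}_{x_l}S_l^{-1/2}\}_{x_l}$, which sums to $I$ and so is a genuine POVM. The GHZ identity gives $\bra{\Phi}\bigotimes_l S_l^{-1/2}C^{(l)}_{x_l}S_l^{-1/2}\ket{\Phi}=\sum_{i,j}\prod_l C^{(l)}_{x_l}(i,j)=P(\vec x)$, and the same identity with each $C^{(l)}_{x_l}$ replaced by $S_l$ shows $\|\ket{\Phi}\|^2=\sum_{\vec x}P(\vec x)=1$, so $\ket{\Phi}$ is legitimate. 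Each party holds a dimension-$r$ register, i.e. $\lceil\log_2 r\rceil$ qubits, whence $\qcorr(P)\le k\lceil\log_2\prank{k}(P)\rceil$.

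For the lower bound, fix an optimal protocol: a seed $\sigma$ on $\bigotimes_l\h_l$ with $\dim\h_l=d_l=2^{n_l}$ and (by Naimark) genuine local POVMs $\{M^{(l)}_{x_l}\}$ satisfying $P(\vec x)=\operatorname{tr}\!\big(\sigma\bigotimes_l M^{(l)}_{x_l}\big)$ and $\sum_l n_l=\qcorr(P)$. Let $l^*$ be the party of largest dimension; relabel $l^*=k$. I would fold party $k$ into the rest by setting $\Sigma_{x_k}=\operatorname{tr}_{\h_k}\!\big[(I\otimes M^{(k)}_{x_k})\sigma\big]$, which is PSD on $\h_1\otimes\cdots\otimes\h_{k-1}$ and satisfies $P(\vec x)=\operatorname{tr}\!\big[\Sigma_{x_k}\bigotimes_{l<k}M^{(l)}_{x_l}\big]$. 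Indexing this residual space by $\vec a,\vec b\in\prod_{l<k}[d_l]$ and expanding the trace gives $P(\vec x)=\sum_{\vec a,\vec b}\Sigma_{x_k}(\vec b,\vec a)\prod_{l<k}M^{(l)}_{x_l}(a_l,b_l)$.

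This is already a Definition~\ref{def:psd} factorization of size $r=\prod_{l<k}d_l$: for $l<k$ set $C^{(l)}_{x_l}(\vec a,\vec b)=M^{(l)}_{x_l}(a_l,b_l)$, which equals $M^{(l)}_{x_l}$ tensored with all-ones matrices on the other coordinates and is therefore PSD, and set $C^{(k)}_{x_k}=\Sigma_{x_k}^{T}$, which is PSD because transposition preserves positivity. The entrywise product $\prod_{l=1}^k C^{(l)}_{x_l}(\vec a,\vec b)$ reproduces the displayed sum, so $\prank{k}(P)\le\prod_{l\ne l^*}d_l$ and hence $\lceil\log_2\prank{k}(P)\rceil\le\sum_{l\ne l^*}n_l=\qcorr(P)-n_{l^*}$. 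Since $n_{l^*}=\max_l n_l\ge\frac1k\sum_l n_l=\frac1k\qcorr(P)$, this yields $\lceil\log_2\prank{k}(P)\rceil\le(1-\tfrac1k)\qcorr(P)$, i.e. $\qcorr(P)\ge\frac{k}{k-1}\lceil\log_2\prank{k}(P)\rceil$, which in particular implies the claimed $\frac{k}{2k-2}\lceil\log_2\prank{k}(P)\rceil$.

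The routine-looking but genuinely delicate points are the two places where straightforward attempts fail. First, in the upper bound one cannot simply measure the raw factors $C^{(l)}_{x_l}$ on $\ket{\Phi_r}$, since they need not form POVMs; this is precisely why the seed must be the $S_l^{1/2}$-deformed GHZ state, and verifying that this restores normalization is the crux. Second, in the lower bound the main care is the complex-conjugation/transpose bookkeeping, so that after folding every factor lands as an honest Hermitian PSD matrix in the conjugation-free form of Definition~\ref{def:psd} (the transpose on $\Sigma_{x_k}$ is exactly what makes this work). The one structural choice to justify is that folding a single (largest) party suffices; I expect this to be the heart of the matter, and I note that the alternative route of purifying $\sigma$ and distributing the reference symmetrically among all parties would incur an extra doubling and reproduce the weaker constant $2-\tfrac2k$, matching the factor appearing in Theorem~\ref{thm:qcc_intro}.
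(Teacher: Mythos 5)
Your proposal is correct, and on the lower bound it takes a genuinely different route from the paper's --- one that in fact gives a sharper constant. For the upper bound the paper builds from the PSD factors an explicit purification $\ket{\psi}=\sum_{i=1}^r\bigotimes_t\sum_{x_t}\ket{x_t}\otimes\ket{x_t}\otimes\ket{u^i_{x_t}}$ of $\rho=\sum_{\vec x}P(\vec x)\ket{\vec x}\bra{\vec x}$ and invokes Theorem~\ref{thm:qccforpure} (each local rank is at most $r$); your deformed-GHZ seed with the POVMs $\{S_l^{-1/2}C^{(l)}_{x_l}S_l^{-1/2}\}$ is essentially the same object described operationally, and both yield $k\lceil\log_2 r\rceil$. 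One small caveat there: when $S_l$ is singular, do not ``restrict to the support'' --- the Hadamard-product form in Definition~\ref{def:psd} is basis-dependent, so passing to a basis of $\mathrm{supp}(S_l)$ is not innocent; instead use the pseudo-inverse, note the resulting POVM sums to the projector onto $\mathrm{supp}(S_l)$, which contains the local support of your seed, and complete it arbitrarily on the orthogonal complement. For the lower bound the paper detours through purifications: it applies Theorem~\ref{thm:qcc_intro} to replace the mixed seed by an optimal purification (losing the factor $2-\tfrac{2}{k}$) and then reads off a size-$\prod_i r_i$ factorization via Lemma~\ref{lem:qccforpure}. You instead work directly with the mixed seed and the local POVMs, folding only the largest party via $\Sigma_{x_k}=\tr_{\h_k}[(I\otimes M^{(k)}_{x_k})\sigma]$ and padding the remaining POVM elements with all-ones factors; I checked that $M^{(l)}_{x_l}$ tensored with all-ones matrices and $\Sigma_{x_k}^{T}$ are PSD and that the entrywise sum reproduces $\tr[\sigma\bigotimes_l M^{(l)}_{x_l}]=P(\vec x)$, so you legitimately get $\prank{k}(P)\le\prod_{l\ne l^*}2^{n_l}$ and hence $\qcorr(P)\ge\frac{k}{k-1}\lceil\log_2\prank{k}(P)\rceil$, a factor-$2$ improvement over the stated $\frac{k}{2k-2}$ that agrees with the tight bipartite characterization of \cite{JSWZ13} at $k=2$. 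Your closing diagnosis is accurate: the paper's weaker constant is exactly the price of symmetrizing over a purification rather than folding a single party.
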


\subsection{Multipartite quantum communication complexity}\label{intro:4}
As earlier mentioned, one can also consider the setting in which the players share nothing at the beginning and communicate to generate some target state. The communication complexity is formally defined as follows.
\begin{Def}\label{def:qcomm}
Suppose $k$ parties, $A_1$, $A_2$, ..., $A_k$, initially share nothing and aim to jointly generate a quantum state
$\rho$ by communication. The quantum communication complexity of generating $\rho$, denoted $\qcomm(\rho)$, is the minimum number of qubits exchanged between these $k$ parties.
\end{Def}
The following theorem gives bounds for quantum communication complexity of pure states. Recall that $\m(\ket{\psi}) = \sum_{j=1}^k \lceil \log_2 \rank(\rho_i) \rceil $, where $\rho_i$ is $\ket{\psi}$ reduced to Player $i$'s space.

\newtheorem*{Thm-qcomm_pure}{Theorem~\ref{thm:qcomm_pure}}
\begin{Thm}\label{thm:qcomm_pure}
Suppose $\ket{\psi}$ is a $k$-partite pure state. Then
\begin{align*}
   \frac{1}{2}\m(\ket{\psi})\leq\qcomm(\ket{\psi}) \leq \frac{k-1}{k}\m(\ket{\psi}).
\end{align*}
\end{Thm}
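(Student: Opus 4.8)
The plan is to sandwich $\qcomm(\ket{\psi})$ by passing between communication protocols and the shared-state picture, exploiting that $\qcorr(\ket{\psi}) = \m(\ket{\psi})$ by Theorem~\ref{thm:qccforpure}, together with the fact that for a pure state the Schmidt rank across the cut separating party $j$ from the remaining parties equals $\rank(\rho_j)$. I would treat the two bounds separately, and in both I use the per-party quantities $n_j \defeq \lceil\log_2\rank(\rho_j)\rceil$, so that $\sum_j n_j = \m(\ket{\psi})$.

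For the upper bound, I would start from an optimal correlation protocol, which by Theorem~\ref{thm:qccforpure} uses a seed state $\sigma$ in which party $j$ holds $n_j$ qubits. To turn this into a communication protocol with no preshared state, let $j^\ast = \arg\max_j n_j$ prepare all of $\sigma$ locally and then send to each other party $j$ its $n_j$-qubit share; afterwards every party applies the same local operation as in the correlation protocol, reproducing $\ket{\psi}$. The communication cost is $\sum_{j\neq j^\ast} n_j = \m(\ket{\psi}) - \max_j n_j$, and since $\max_j n_j \geq \frac{1}{k}\sum_j n_j = \frac{1}{k}\m(\ket{\psi})$, this is at most $\frac{k-1}{k}\m(\ket{\psi})$.

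For the lower bound, I would fix any communication protocol generating $\ket{\psi}$ from no initial entanglement, and for each party $j$ consider the bipartite cut $\{j\}$ versus the remaining $k-1$ parties. Let $d_j$ be the number of qubits crossing this cut during the protocol. The key subroutine is the standard entanglement-rate bound: starting from a product state across a cut, transmitting $m$ qubits across it can raise the Schmidt rank by at most a factor $2^m$, so reaching the final Schmidt rank $\rank(\rho_j)$ forces $d_j \geq n_j$. I would then double-count: each transmitted qubit travels between two parties $a$ and $b$, hence crosses exactly the two single-party cuts $\{a\}$ and $\{b\}$ and no others, so $\sum_j d_j = 2\,\qcomm(\ket{\psi})$. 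Combining, $2\,\qcomm(\ket{\psi}) = \sum_j d_j \geq \sum_j n_j = \m(\ket{\psi})$, giving the claimed $\qcomm(\ket{\psi}) \geq \frac{1}{2}\m(\ket{\psi})$.

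The counting identities are routine; the main obstacle is making the cut-based entanglement bound fully rigorous in the multipartite \emph{interactive} model. In particular, I need to verify that arbitrary local operations, ancillas, and intermediate measurements on either side of the cut cannot increase the cross-cut Schmidt rank, and that only the qubits physically crossing the cut count toward $d_j$, so that the per-cut lower bound $d_j \geq n_j$ survives adaptivity and mid-protocol measurement outcomes. Some care is also needed to confirm that the optimal correlation protocol from Theorem~\ref{thm:qccforpure} indeed allocates exactly $n_j$ qubits to party $j$, ensuring the cut sizes used in both directions refer to the same quantities.
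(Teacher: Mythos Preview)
Your proposal is correct and follows essentially the same approach as the paper: for the upper bound you have the largest party prepare the optimal seed from Theorem~\ref{thm:qccforpure} and distribute the other shares, and for the lower bound you use the per-cut Schmidt-rank increase bound and double-count the communicated qubits (your $d_j$ is exactly the paper's $\sum_{j'\ne j} c_{jj'}$). The caveats you raise are handled just as lightly in the paper---it notes that since the target is pure one may assume the protocol is unitary on a pure initial state, which disposes of the measurement/ancilla issue---and the proof of Theorem~\ref{thm:qccforpure} indeed produces a seed with exactly $n_j=\lceil\log_2 r_j\rceil$ qubits at party $j$.
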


Next we turn to general multipartite quantum mixed states. Different than quantum correlation complexity,
the quantum communication complexity $\qcomm(\rho)$ is always equal to the minimum $\qcomm(\ket{\psi})$ over
purifications $\ket{\psi}$ of $\rho$.

\newtheorem*{Thm-qcomm_topure}{Theorem~\ref{thm:qcomm_topure}}
\begin{Thm}\label{thm:qcomm_topure}
For any $k$-partite quantum state $\rho$,
\begin{align*}
    \qcomm(\rho) = \min \{\qcomm(\ket{\psi}): \ket{\psi} \text{ is a purification of }\rho\}.
\end{align*}
\end{Thm}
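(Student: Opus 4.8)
The plan is to prove the two inequalities separately; the ``$\leq$'' direction is essentially free, while ``$\geq$'' carries the real content.

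For $\qcomm(\rho) \leq \min\{\qcomm(\ket{\psi})\}$, I would fix any purification $\ket{\psi}$ of $\rho$ together with an optimal protocol generating it, where party $A_i$ ends up holding its target register $\h_i$ and a local purifying ancilla $\h_i'$. Since tracing out all the $\h_i'$ yields exactly $\rho$ and each party can discard its own ancilla by a local operation requiring no communication, the same protocol followed by these local partial traces generates $\rho$ with the same amount of communication. Hence $\qcomm(\rho)\leq\qcomm(\ket{\psi})$ for every purification $\ket{\psi}$, and taking the minimum gives the bound.

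For $\qcomm(\rho)\geq\min\{\qcomm(\ket{\psi})\}$, I would start from an optimal protocol $\Pi$ for $\rho$ with communication $c=\qcomm(\rho)$ and \emph{coherently dilate} it into a protocol $\Pi'$ whose global final state is pure. Concretely, I replace each party's local CPTP map by a Stinespring dilation on an enlarged local register (appending fresh ancillas in $\ket{0}$), defer every measurement via controlled unitaries on private registers, and never discard any system. The key observation is that none of these changes touches the communication register: whenever $\Pi$ transmits a block of $m$ qubits (classical bits cost the same number of qubits), $\Pi'$ transmits a block of $m$ qubits, now possibly carrying a superposition. Thus $\Pi'$ has communication at most $c$, and its final state $\ket{\psi}$ is a pure $k$-partite state with $A_i$ holding $\h_i$ together with all the local ancillas it accumulated.

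It then remains to check that $\ket{\psi}$ is genuinely a purification of $\rho$: tracing out all the accumulated local ancillas (including the registers holding deferred measurement outcomes) exactly undoes the dilations, reproducing the output $\rho$ of $\Pi$. Hence $\ket{\psi}$ is a $k$-partite purification of $\rho$ with $\qcomm(\ket{\psi})\leq c$, so $\min\{\qcomm(\ket{\psi})\}\leq\qcomm(\rho)$, and combined with the first direction this yields equality. I expect the main obstacle to be the bookkeeping of this coherent dilation in the fully interactive multiparty model, where later operations may be classically controlled on earlier measurement outcomes: one must argue that deferring all measurements and retaining all ancillas can always be arranged \emph{locally} at each party, so that no extra qubits are forced across any cut and the communication count is preserved exactly.
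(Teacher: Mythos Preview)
Your proposal is correct and follows essentially the same approach as the paper. The paper's proof is even terser: it simply observes that every local operation can be written as ``append ancilla, apply a unitary, trace out,'' and that omitting all the partial traces leaves a pure state $\ket{\psi}$ purifying $\rho$ while using exactly the same communication; your Stinespring/deferred-measurement language is just a more explicit packaging of the same dilation, and the bookkeeping concern you flag is handled (implicitly) by the fact that each such dilation is purely local and never touches the transmitted registers.
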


Combining the results in the above two subsections together, we get
the following relationship between $\qcorr(\rho)$ and $\qcomm(\rho)$
for a general multipartite quantum state $\rho$.
\newtheorem*{Thm-comm-vs-corr}{Corollary~\ref{Thm:comm-vs-corr}}
\begin{Cor}\label{Thm:comm-vs-corr}
For any $k$-partite quantum state $\rho$,
\begin{align*}
    \frac{k}{k-1}\qcomm(\rho)\leq\qcorr(\rho)\leq2\qcomm(\rho).
\end{align*}
\end{Cor}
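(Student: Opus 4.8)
The plan is to prove the two inequalities separately, and the essential observation is that they call for rather different arguments. The upper bound $\qcorr(\rho)\le 2\qcomm(\rho)$ drops out by formally chaining together the pure-state and purification results of the previous two subsections. The lower bound $\tfrac{k}{k-1}\qcomm(\rho)\le\qcorr(\rho)$, however, cannot be obtained this way: chaining Theorem~\ref{thm:qcc_intro} with Theorem~\ref{thm:qcomm_topure} and the pure-state bounds only yields the weaker constant $\tfrac{2(k-1)^2}{k^2}$ for $k>2$, because $\qcorr(\rho)$ may be strictly smaller than the best purification value $r(\rho)$. So for the lower bound I would instead convert a correlation protocol directly into a communication protocol.

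For the upper bound, I would let $\ket{\phi}$ be the purification of $\rho$ minimizing the communication complexity, so that $\qcomm(\ket{\phi})=\qcomm(\rho)$ by Theorem~\ref{thm:qcomm_topure}. Then I chain the estimates
\[
\qcorr(\rho)\le r(\rho)\le \qcorr(\ket{\phi}) = \m(\ket{\phi})\le 2\qcomm(\ket{\phi}) = 2\qcomm(\rho),
\]
where the first inequality is Theorem~\ref{thm:qcc_intro}, the second holds because $r(\rho)$ is a minimum over all purifications and hence at most its value at the particular $\ket{\phi}$, the equality is Theorem~\ref{thm:qccforpure}, and the third is the lower half $\tfrac{1}{2}\m(\ket{\phi})\le\qcomm(\ket{\phi})$ of Theorem~\ref{thm:qcomm_pure}.

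For the lower bound I would argue directly. Fix an optimal correlation protocol with shared seed $\sigma$, in which party $A_i$ holds $n_i$ qubits and $\sum_{i=1}^k n_i=\qcorr(\rho)$. I then build a communication protocol: designate one party $A_j$ as a preparer, who locally prepares a purification $\ket{\sigma}$ of $\sigma$ (using as many local ancilla qubits as needed, which are free of communication cost), keeps the purifying register together with its own share, and sends each register $A_i$ to party $A_i$ for $i\ne j$. After tracing out the purifying register the $k$ parties share exactly $\sigma$, so applying the original local operations produces $\rho$. The communication cost is $\sum_{i\ne j} n_i = \qcorr(\rho)-n_j$, and bounding the minimum over $j$ by the average gives
\[
\qcomm(\rho)\le \min_j\big(\qcorr(\rho)-n_j\big)\le \qcorr(\rho)-\frac{1}{k}\sum_{j=1}^k n_j = \frac{k-1}{k}\,\qcorr(\rho),
\]
which is exactly the claimed lower bound.

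The main obstacle is the lower bound. Two points require care: first, recognizing that the tight constant $\tfrac{k}{k-1}$ is simply not recoverable from the earlier theorems, since the route through purifications (via $r(\rho)$) is lossy; and second, handling a possibly mixed seed $\sigma$. The latter is the crux of the direct conversion and is resolved by having the preparer create a purification locally and distribute only the other parties' registers, so that the communicated registers have sizes exactly $n_i$ and are never inflated by the purifying system. Once this is set up correctly, the averaging step supplies the $\tfrac{k-1}{k}$ factor with no further work.
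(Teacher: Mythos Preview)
Your proof is correct and follows essentially the same approach as the paper. For the upper bound, the paper chains Theorem~\ref{thm:qcomm_topure}, Theorem~\ref{thm:qcomm_pure} (lower half), and Theorem~\ref{thm:qcc_intro} (first inequality) exactly as you do. For the lower bound, the paper simply remarks that the argument is the same as the upper-bound proof of Theorem~\ref{thm:qcomm_pure}, namely: the party holding the most qubits of the optimal seed $\sigma$ prepares $\sigma$ locally and ships the remaining registers, costing at most $\tfrac{k-1}{k}\qcorr(\rho)$. Your version is slightly more explicit (purifying $\sigma$ before distribution, then averaging over the choice of preparer rather than picking the maximizer), but this is the same idea with cosmetic differences.
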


\subsection{Approximate quantum correlation complexity}
In this section, we consider relaxing the task of state generation
by allowing approximation. After all, we usually generate the state
for some later information processing purpose, and thus if the
generated state $\rho'$ is very close to the target state $\rho$,
then the same precision can be preserved after whatever further
operations, global or local.

\subsubsection{Bipartite}\label{intro:1}
When a good approximation instead of the exact generation is satisfactory, the minimum size of the shared seed state can be smaller than that for the exact generation. In \cite{JSWZ13}, a natural definition for the approximate correlation complexity was given as follows.
\begin{Def}
Let $\rho$ be a bipartite quantum state in $\h_{A} \otimes \h_{B}$, and $\epsilon>0$. Define
\begin{equation*}\label{def:forappqcc}
\qcorr_\epsilon(\rho) \defeq \min\{ \qcorr(\rho') : \rho' \in \h_{A}
\otimes \h_{B}\nonumber \text{ and } \F(\rho, \rho') \geq 1 -
\epsilon\}
\end{equation*}
\end{Def}
Since for any bipartite state $\rho$, generating a mixed state is the same as generating its (optimal) purification \cite{Zha12,JSWZ13}
\[\qcorr(\rho)= \min\{\qcorr(\ket{\psi}): \ket{\psi} \text{ purifies } \rho\} = \min\{\lceil \log_2 \srank(\ket{\psi})\rceil: \ket{\psi} \text{ purifies } \rho\},\]
%and that
%\[\qcorr(\rho)= \min\{\qcorr(\ket{\psi}): \ket{\psi} \text{ purifies } \rho\} = \min\{\lceil \log_2 \srank(\ket{\psi})\rceil: \ket{\psi} \text{ purifies } \rho\}.\]
it is also natural to given another definition by putting the approximation on the purification instead of the original target state. Let
\[\qcorr_\epsilon'(\rho) = \min\{\qcorr(\ket{\psi'}): \ket{\psi} \text{ purifies } \rho, \F(\ket{\psi},\ket{\psi'}) \ge 1-\epsilon\}.\]
As we will show, these two definitions are equivalent, \ie,
$\qcorr_\epsilon(\rho) = \qcorr_\epsilon'(\rho)$. Note that the
second definition is easier to analyze, since the approximate
correlation complexity for pure states are well understood
\cite{JSWZ13}: For any $\ket{\psi} =
\sum_{x,y}A(x,y)\ket{x}\ket{y}$, let matrix $A = [A(x,y)]$, then
\[\min\{\qcorr(\ket{\psi'}): \F(\ket{\psi},\ket{\psi'}) \ge
1-\epsilon\} = \rank_{2\epsilon-\epsilon^2}(A),\] where the
approximate rank $\rank_\delta(A)$ of a matrix $A$ is the smallest
number $r$ \st the summation of the largest $r$ singular values
squared is at least $1-\delta$.

%Therefore, the approximate quantum correlation complexity of $\rho$ is determined by the minimal Schmidt rank of its ``approximate purification". Here an approximate purification of $\rho$ means a pure state with the reduced density matrix on A and B close to $\rho$. However, usually the pattern of approximating a mixed state is very complicated, thus the approximate purifications seems not easy to analyze. In the following theorem, however, we will see that this difficulty can be avoided, and approximate correlation complexity can be characterized completely by approximating the {\em exact} purifications of $\rho$, and this problem has been well-solved \cite{JSWZ13}.

%\begin{Lem}\label{thm:qcc1_intro}
%Assume that $\rho$ is a quantum state in $\h_A \otimes \h_{B}$. Let
%\begin{align*}
%    \qcorr'_\epsilon(\rho) = \min_{\h_{A_1}, \h_{B_1}}\{\big\lceil\log_2\srank_\epsilon(\ket{\varphi})\big\rceil: \ket{\varphi} \text{ is a pure state} \\
%        \text{ in } \h_{A_1}\otimes \h_A\otimes \h_B \otimes \h_{B_1},  \rho= \tr_{\h_{A_1} \otimes \h_{B_1}}
%        \ket{\varphi}\bra{\varphi}\},
%\end{align*}
%then $\qcorr_\epsilon(\rho)=\qcorr'_\epsilon(\rho)$.
%\end{Lem}

Based on this result, we could get the following
characterization of $\qcorr_\epsilon(\rho)$ for the special case of
classical $\rho$, namely when $\rho$ is a classical distribution $P$. We first define approximate PSD-rank and
approximate correlation complexity by classical states as follows.
\begin{Def}\label{def:approximatePSD}
$P = [p(x,y)]_{x,y}$ is a bipartite probability distribution, its
$\epsilon$-approximate PSD-rank is
\begin{equation}\label{def:appprank}
\apprank{\epsilon}(P)=\min\{\pranktwo(P'): F(P,P')\geq1-\epsilon\}.
\end{equation}
where $P'$ is another probability distribution on the same sample space of $P$.
\end{Def}
\newcommand\cla{cla}
\begin{Def}
    For a bipartite classical distribution $P = [P(x,y)]_{x,y}$, its {\em $\epsilon$-approximate quantum correlation complexity by classical state} is $\qcorr_\epsilon^{\cla}(P) = \min\{\qcorr(P'): F(P,P') \geq 1-\epsilon\}$, where $P'$ is another probability distribution on the same sample space of $P$.
\end{Def}
The following theorem says that the most efficient approximate
generation of a classical state can always be achieved by another
classical state. Moreover, the approximate correlation complexity of a classical
state could be completely characterized by the approximate PSD-rank.

\newtheorem*{Thm-aqcorr_P}{Theorem~\ref{thm:aqcorr_P}}
\begin{Thm}\label{thm:aqcorr_P}
For any classical state $P = [P(x,y)]_{x,y}$, \[\qcorr_\epsilon(P) =
\qcorr_\epsilon^{\cla}(P) = \lceil \log_2 \apprank{\epsilon}(P) \rceil.\]
\end{Thm}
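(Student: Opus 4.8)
The plan is to prove the two equalities separately: first the operational identity $\qcorr_\epsilon(P) = \qcorr_\epsilon^{\cla}(P)$, showing that restricting the approximating state to be classical costs nothing, and then the algebraic identity $\qcorr_\epsilon^{\cla}(P) = \lceil \log_2 \apprank{\epsilon}(P)\rceil$, which follows from the exact bipartite characterization together with monotonicity of the ceiling-log. For the first equality, the inequality $\qcorr_\epsilon(P) \le \qcorr_\epsilon^{\cla}(P)$ is immediate, since every classical $P'$ with $\F(P,P')\ge 1-\epsilon$ is in particular a quantum state approximating $P$, so the feasible set defining $\qcorr_\epsilon(P)$ contains the one defining $\qcorr_\epsilon^{\cla}(P)$.

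For the reverse inequality I would take a state $\rho'$ optimal for $\qcorr_\epsilon(P)$: a quantum state with $\F(P,\rho')\ge 1-\epsilon$ generated from a seed $\sigma$ of size $\qcorr_\epsilon(P)$ by local operations $\mathcal E_A\otimes \mathcal E_B$. Let $\Phi=\Phi_A\otimes\Phi_B$ be the completely dephasing channel in the computational bases of \alice and \bob, with $\Phi_A(\cdot)=\sum_x\ket{x}\bra{x}(\cdot)\ket{x}\bra{x}$ and $\Phi_B$ analogous, and define the classical distribution $P'(x,y)=\bra{xy}\rho'\ket{xy}$, whose diagonal embedding is exactly $\Phi(\rho')$. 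Two observations then make $P'$ a witness of the same cost. First, because $P$ is already diagonal we have $\Phi(P)=P$, and since fidelity is non-decreasing under the quantum channel $\Phi$ (data processing), $\F(P,P')=\F(\Phi(P),\Phi(\rho'))\ge \F(P,\rho')\ge 1-\epsilon$, so $P'$ is feasible. Second, $\Phi$ is a \emph{local} map, so each party may apply $\mathcal E$ and then measure in the computational basis; this produces joint outcomes distributed exactly as $P'$ from the \emph{same} seed $\sigma$, whence $\qcorr(P')\le \mathrm{size}(\sigma)=\qcorr_\epsilon(P)$. Taking the minimum over feasible classical $P'$ gives $\qcorr_\epsilon^{\cla}(P)\le \qcorr_\epsilon(P)$.

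I expect this ``classicalization'' step to be the real content of the proof: the point is that local dephasing simultaneously preserves feasibility (since the target $P$ is classical, fidelity can only improve under $\Phi$) and does not increase the resource (since $\Phi$ is local and composes with the generating maps on the same seed). Everything else is bookkeeping, but this step is where the claim ``the best quantum approximation is no better than the best classical one'' genuinely gets used.

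For the second equality I would invoke the exact bipartite characterization $\qcorr(P') = \lceil \log_2 \pranktwo(P')\rceil$, valid for every classical distribution $P'$, so that
\[\qcorr_\epsilon^{\cla}(P)=\min\{\,\lceil \log_2 \pranktwo(P')\rceil : \F(P,P')\ge 1-\epsilon\,\}.\]
Since $r\mapsto \lceil\log_2 r\rceil$ is nondecreasing on the positive integers, a minimizing $P'$ is one of smallest PSD-rank, and the minimum of the ceilings equals the ceiling of the minimum:
\[\min_{P'}\big\lceil \log_2 \pranktwo(P')\big\rceil = \big\lceil \log_2 \min_{P'}\pranktwo(P')\big\rceil = \big\lceil \log_2 \apprank{\epsilon}(P)\big\rceil.\]
Chaining this with the first equality yields $\qcorr_\epsilon(P)=\qcorr_\epsilon^{\cla}(P)=\lceil\log_2 \apprank{\epsilon}(P)\rceil$, as claimed.
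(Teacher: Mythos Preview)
Your proof is correct and takes essentially the same approach as the paper: for the first equality the paper also measures an optimal quantum approximant $\rho'$ in the computational basis (your dephasing $\Phi$) and uses monotonicity of fidelity under quantum operations together with $\Phi(P)=P$, while the second equality is dismissed as ``immediate from their definitions.'' Your version is simply more explicit about why $\qcorr(P')\le\qcorr(\rho')$ (the dephasing is local and composes with the generating protocol on the same seed) and why the nondecreasing map $r\mapsto\lceil\log_2 r\rceil$ commutes with the minimum, but the argument is the same.
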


Finally, for the general case of an arbitrary quantum state $\rho$, we give the following characterization of
$\qcorr_{\epsilon}(\rho)$.

\newtheorem*{Thm-aQCorrgeneral}{Theorem~\ref{thm:aQCorrgeneral}}
\begin{Thm}\label{thm:aQCorrgeneral}
Let $\sigma$ be an arbitrary quantum state in $\h_A \otimes \h_B$,
and $0<\epsilon<1$. Then $\qcorr_{\epsilon}(\sigma) = \lceil \log_2 r
\rceil $, where $r$ is the minimum integer \st there exist a collection of matrices, $\{A_x\}$'s and $\{B_y\}$'s of the same column number $l\ge r$, satisfying the following conditions.

\begin{enumerate}
    \item The matrices relate to $\sigma$ by the following equation.
    \begin{align}\label{eq:requireforgeneral}
        \sigma=\sum_{x,x'; y,y'} \ket{x}\bra{x'} \otimes \ket{y}\bra{y'} \cdot \tr \Big( (A_{x'}^\dag A_x)^T (B_{y'}^\dag B_y) \Big)\Big).
    \end{align}

    \item Denoting the $i$-th column of any matrix $M$ by $\ket{M(i)}$, then
    \begin{equation}\label{eq:requireforgeneral2}
        \sum_x\qip{A_x(i)}{A_x(j)}=\sum_y\qip{B_y(i)}{B_y(j)}=0,
    \end{equation}

    \item
    \begin{align}\label{eq:requireforgeneral3}
        \sum_{i=1}^r\Big(\sum_x\qip{A_x(i)}{A_x(i)}\Big)\Big(\sum_y\qip{B_y(i)}{B_y(i)}\Big)\geq1-\epsilon,
    \end{align}
\end{enumerate}
\end{Thm}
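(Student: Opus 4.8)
The plan is to reduce the mixed-state problem to the already-understood pure-state case through purifications, and then to recognize the three conditions as, respectively, the reduced-density-matrix constraint, a canonical orthogonal form, and the approximate-rank threshold. First I would invoke the equivalence $\qcorr_\epsilon(\sigma)=\qcorr_\epsilon'(\sigma)$ established earlier, which lets me write $\qcorr_\epsilon(\sigma)=\min\{\qcorr(\ket{\psi'}):\ket{\psi}\text{ purifies }\sigma,\ \F(\ket{\psi},\ket{\psi'})\ge 1-\epsilon\}$, where the purifications live in $\h_A\otimes\h_{A'}\otimes\h_B\otimes\h_{B'}$ with \alice holding $AA'$ and \bob holding $BB'$, and $\qcorr(\ket{\psi'})=\lceil\log_2\srank(\ket{\psi'})\rceil$ is the log-Schmidt-rank across the $AA'\mid BB'$ cut. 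For a fixed purification $\ket{\psi}$ with coefficient matrix $M_\psi$ across that cut, the quoted pure-state characterization gives the inner minimization as $\lceil\log_2\rank_{2\epsilon-\epsilon^2}(M_\psi)\rceil$. Since $\lceil\log_2(\cdot)\rceil$ is monotone, the outer minimum over purifications commutes with the logarithm, so the whole quantity equals $\lceil\log_2 r\rceil$ with $r=\min_{\ket{\psi}}\rank_{2\epsilon-\epsilon^2}(M_\psi)$, and it remains to identify this $r$ with the integer described by conditions (1)--(3).

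Next I would establish the correspondence between the data $\{A_x\},\{B_y\}$ and purifications. Given the matrices, set $\ket{\psi}=\sum_{x,y,i}\ket{x}\ket{y}\ket{A_x(i)}\ket{B_y(i)}$, with $\ket{A_x(i)}\in\h_{A'}$ and $\ket{B_y(i)}\in\h_{B'}$ the $i$-th columns. A direct computation of $\tr_{A'B'}(\ket{\psi}\bra{\psi})$ gives its $(x,y),(x',y')$ entry as $\sum_{i,j}\qip{A_{x'}(j)}{A_x(i)}\qip{B_{y'}(j)}{B_y(i)}=\tr\big((A_{x'}^\dag A_x)^T(B_{y'}^\dag B_y)\big)$, so $\ket{\psi}$ purifies $\sigma$ precisely when (1) holds. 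Conversely, writing the coefficient tensor of an arbitrary purification as a matrix $M$ with rows indexed by $(x,a)$ and columns by $(y,b)$, any factorization $M=\tilde A\tilde B^T$ reshapes into such $\{A_x\},\{B_y\}$ by grouping the rows of $\tilde A$ according to $x$ and those of $\tilde B$ according to $y$; no constraint beyond the one on $M$ itself is imposed, so every purification is realized. Hence (1) ranges over exactly the purifications of $\sigma$.

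Then I would read off conditions (2) and (3) from the singular values of $M_\psi$. Computing $\sum_x\qip{A_x(i)}{A_x(j)}=(\tilde A^\dag\tilde A)_{ij}$, and likewise for $\tilde B$, condition (2) (understood for $i\ne j$) says exactly that the columns of $\tilde A$ and of $\tilde B$ are orthogonal. Factoring $\tilde A=\hat A D_A$ and $\tilde B=\hat B D_B$ into orthonormal columns times diagonal matrices of column norms then exhibits $M_\psi=\hat A (D_A D_B)\hat B^T$ as a singular value decomposition, so the Schmidt coefficients of $\ket{\psi}$ across the cut are $\sigma_i=\sqrt{\big(\sum_x\qip{A_x(i)}{A_x(i)}\big)\big(\sum_y\qip{B_y(i)}{B_y(i)}\big)}$; the same computation gives $\qip{\psi}{\psi}=\sum_i\sigma_i^2$, so normalization is automatic. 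Condition (3) is then $\sum_{i\le r}\sigma_i^2\ge 1-\epsilon$, i.e.\ that the top $r$ squared Schmidt coefficients capture weight at least $1-\epsilon$, which is exactly the statement that $M_\psi$ admits a Schmidt-rank-$r$ truncation at the prescribed fidelity. Combining this with the pure-state minimization and the outer minimum over purifications yields $r=\min_{\ket{\psi}}\rank_{2\epsilon-\epsilon^2}(M_\psi)$, and hence the claimed formula $\qcorr_\epsilon(\sigma)=\lceil\log_2 r\rceil$.

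I expect the main obstacle to be the ``without loss of generality'' hidden in imposing the orthogonality condition (2): one must argue that restricting to orthogonal-column representations does not increase the minimizing $r$. This is essentially the assertion that the optimal low-rank approximation of a fixed purification is its Schmidt truncation, so that the singular value decomposition---which automatically produces orthogonal columns---already realizes the optimum, and the minimum over all factorizations is attained on this canonical form. Two bookkeeping points need care along the way: lining up the plain transpose in $M_\psi=\tilde A\tilde B^T$ with the conjugate transpose of a genuine singular value decomposition (a harmless entrywise conjugation of $\hat B$), and reconciling the fidelity convention, since the pure-state lemma is phrased with $\rank_{2\epsilon-\epsilon^2}$ (squared-weight threshold $(1-\epsilon)^2$) whereas condition (3) is phrased with threshold $1-\epsilon$; fixing the convention consistently is what makes the two thresholds agree.
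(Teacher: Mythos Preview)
Your proposal is correct and follows essentially the same route as the paper: reduce to purifications via the equivalence $\qcorr_\epsilon(\sigma)=\qcorr'_\epsilon(\sigma)$, identify condition (1) with the purification constraint, condition (2) with the Schmidt-decomposition orthogonality, and condition (3) with the $\srank_\epsilon$ threshold. The threshold-convention mismatch you flag between $(1-\epsilon)^2$ and $1-\epsilon$ is real and is present in the paper itself (compare its Lemma on pure-state approximation with its invocation of ``Lemma 5.1 of \cite{JSWZ13}'' in the proof), so your caveat is well placed.
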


\subsubsection{Multipartite}\label{intro:5}

%Without Schmidt decompositions, it will be challenging to characterize the approximate version of correlation complexity for multipartite pure states.
As in \cite{JSWZ13}, it is natural to consider two different approximations to a pure target state, one to approximate by a mixed state, and the other to approximate by a pure state.

\begin{Def}
Let $\epsilon>0$. Let $\ket{\psi}$ be a $k$-partite quantum pure state in
$\h_{1} \otimes \h_{2} \otimes...\otimes \h_{k}$. Define
\begin{equation*}\label{def:forappmulqcc}
\qcorr_\epsilon(\ket{\psi}) \defeq \min\{ \qcorr(\rho') : \rho'
\text{ is in } \h_{1} \otimes \h_{2} \otimes...\otimes
\h_{k}\nonumber \text{ and } \F(\ket{\psi}\bra{\psi}, \rho') \geq 1
- \epsilon\}
\end{equation*}
and
\begin{equation*}
\qcorr_\epsilon^{pure}(\ket{\psi}) \defeq \min\{ \qcorr(\ket{\phi}) :
\ket{\phi} \in \h_{1} \otimes \h_{2} \otimes...\otimes
\h_{k}\nonumber \text{ and } \F(\ket{\psi}\bra{\psi}, \ket{\phi}\bra{\phi}) \geq 1
- \epsilon\}.
\end{equation*}
\end{Def}
We can see that $\qcorr_\epsilon(\rho)$ and
$\qcorr_\epsilon^{pure}(\rho)$ are the complexities of approximating
$\rho$ by mixed and pure states respectively.

For a $k$-partite pure state $\ket{\psi}$ in $\h_1\otimes\h_2\otimes\cdots\otimes \h_k$, let $\rho_i$ be the reduced density matrix of $\ket{\psi}$ in $\h_i$, and $r_i = \rank(\rho_i)$. Denote the $\epsilon$-approximate Schmidt rank
of $\ket{\psi}$ with respect to the separation $(A_i,A_{-i})$ (here $A_{-i} = A_1...A_{i-1}A_{i+1}...A_k$) as $r_i^{(\epsilon)}$, i.e., $r_i^{(\epsilon)}=\srank^{(A_i,A_{-i})}_{\epsilon}(\ket{\psi})$. Then we have

\newtheorem*{Thm-aQccforPure}{Theorem~\ref{thm:aQccforPure_intro}}
\begin{Thm}\label{thm:aQccforPure_intro}
Let $\ket{\psi}\in \bigotimes_{i=1}^k \h_{i}$ be a $k$-partite state, $\epsilon>0$, and
%\begin{align*}
    $\m_{\epsilon}(\ket{\psi}) = \sum_{i=1}^k \big\lceil\log_2 r_{i}^{(\epsilon)}\big\rceil$.
%\end{align*}
Then
\begin{align*}
\m_{\epsilon}(\ket{\psi})\leq\qcorr_{\epsilon}^{pure}(\ket{\psi})\leq
\m_{\epsilon/k}(\ket{\psi}).
\end{align*}
\end{Thm}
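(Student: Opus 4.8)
The plan is to prove both inequalities separately, reducing the multipartite approximate problem to the single-party approximate Schmidt rank along each cut $(A_i, A_{-i})$, and then combining across all cuts.

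For the **lower bound** $\m_{\epsilon}(\ket{\psi}) \le \qcorr_\epsilon^{pure}(\ket{\psi})$, I would argue cut by cut. Fix any pure state $\ket{\phi}$ with $\F(\ket{\psi}\bra{\psi}, \ket{\phi}\bra{\phi}) \ge 1-\epsilon$ that is generated from a seed state of minimal size. Consider the bipartite cut $(A_i, A_{-i})$ grouping all but the $i$-th party together. The generation of $\ket{\phi}$ from a seed induces, on this single cut, a bipartite generation protocol, so by the exact-generation result (Theorem~\ref{thm:qccforpure} specialized to two parties, equivalently the Schmidt-rank characterization) the number of qubits $A_i$ holds in the seed is at least $\lceil \log_2 \srank^{(A_i,A_{-i})}(\ket{\phi}) \rceil$. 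Since $\F(\ket{\psi},\ket{\phi}) \ge 1-\epsilon$ is preserved under the cut, the Schmidt rank of $\ket{\phi}$ across $(A_i,A_{-i})$ is at least the $\epsilon$-approximate Schmidt rank $r_i^{(\epsilon)}$ of $\ket{\psi}$ across the same cut. Summing the per-party qubit counts over $i=1,\dots,k$ recovers $\sum_i \lceil \log_2 r_i^{(\epsilon)}\rceil = \m_\epsilon(\ket{\psi})$, giving the lower bound.

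For the **upper bound** $\qcorr_\epsilon^{pure}(\ket{\psi}) \le \m_{\epsilon/k}(\ket{\psi})$, I would construct an explicit approximating pure state $\ket{\phi}$. The natural candidate is to truncate $\ket{\psi}$ simultaneously along every cut: for each $i$, let $\Pi_i$ be the projector onto the top $r_i^{(\epsilon/k)}$ Schmidt vectors on the $A_i$ side across cut $(A_i,A_{-i})$, and set $\ket{\phi} \propto (\Pi_1 \otimes \cdots \otimes \Pi_k)\ket{\psi}$ (normalized). By the definition of approximate Schmidt rank, each single truncation loses fidelity at most $\epsilon/k$, i.e.\ $\|\Pi_i \ket{\psi}\|^2 \ge 1-\epsilon/k$. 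A union-bound / subadditivity argument then shows the simultaneous truncation loses fidelity at most $\sum_i \epsilon/k = \epsilon$, so $\F(\ket{\psi}\bra{\psi},\ket{\phi}\bra{\phi}) \ge 1-\epsilon$. Moreover, after truncation the local rank $\rank(\rho_i^\phi)$ of $\ket{\phi}$ on party $i$ is at most $r_i^{(\epsilon/k)}$, so by Theorem~\ref{thm:qccforpure} we have $\qcorr(\ket{\phi}) = \m(\ket{\phi}) \le \sum_i \lceil \log_2 r_i^{(\epsilon/k)}\rceil = \m_{\epsilon/k}(\ket{\psi})$, completing the bound.

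**The main obstacle is the fidelity accounting for the simultaneous truncation** in the upper bound. Truncating along one cut changes the Schmidt structure along the others, so the per-cut guarantees do not compose trivially; one must show that applying all the projectors $\Pi_i$ together only degrades fidelity additively. The clean way to control this is to write $1 - \|\Pi_1\cdots\Pi_k\ket{\psi}\|^2$ as a telescoping sum and bound each increment by the corresponding single-cut loss $\|(I-\Pi_i)\ket{\psi}\|^2 \le \epsilon/k$, using that each $\Pi_i$ is a contraction; the relation $\F(\ket{\psi},\ket{\phi}) \ge \|\Pi_1\cdots\Pi_k\ket{\psi}\|^2$ after renormalization then closes the gap. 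I would also need to verify carefully that truncating with $\Pi_i$ genuinely caps the $i$-th local rank even after the other projections are applied, which follows because $\mathrm{range}(\Pi_i)$ has dimension $r_i^{(\epsilon/k)}$ and the support of $\rho_i^\phi$ lies inside it.
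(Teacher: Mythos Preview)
Your proposal is correct and follows essentially the same approach as the paper. The lower bound is argued identically (per-cut Schmidt rank of the approximating $\ket{\phi}$ bounds $r_i^{(\epsilon)}$, then sum and invoke Theorem~\ref{thm:qccforpure}); for the upper bound the paper also truncates onto the top $r_i^{(\epsilon/k)}$ eigenvectors of each reduced $\rho_i$ (exactly your $\Pi_i$) and bounds the total fidelity loss by summing the per-cut losses, with the only cosmetic difference that the paper writes the union bound directly in the coefficient basis of Lemma~\ref{lem:qccforpure} rather than as a telescoping sum of commuting projectors.
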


Finally, we consider the relationship between $\qcorr_{\epsilon}(\ket{\psi})$ and $\qcorr^{pure}_{\epsilon}(\ket{\psi})$.
\newtheorem*{Thm-PureandMixed}{Theorem~\ref{thm:PureandMixed}}
\begin{Thm}\label{thm:PureandMixed}
Let $\ket{\psi}\in \h_{A_1}\otimes\cdots\otimes \h_{A_k}$ be a pure state and $\epsilon >0$. Then
\begin{align*}
\frac{k}{2k-2}\qcorr^{pure}_{k\epsilon}(\ket{\psi})\leq\qcorr_{\epsilon}(\ket{\psi})\leq
\qcorr^{pure}_{\epsilon}(\ket{\psi}).
\end{align*}
\end{Thm}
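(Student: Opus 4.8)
The right-hand inequality is immediate from the definitions. Every pure state $\ket{\phi}$ competing in $\qcorr_\epsilon^{pure}(\ket{\psi})$ is in particular a state $\ket{\phi}\bra{\phi}$ satisfying the same constraint $\F(\ket{\psi}\bra{\psi},\ket{\phi}\bra{\phi})\ge 1-\epsilon$, with $\qcorr(\ket{\phi}\bra{\phi})=\qcorr(\ket{\phi})$. Hence the feasible set defining $\qcorr_\epsilon(\ket{\psi})$ contains the feasible set defining $\qcorr_\epsilon^{pure}(\ket{\psi})$, so the minimum over the former is no larger.

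For the left-hand inequality the plan is to show $\qcorr^{pure}_{k\epsilon}(\ket{\psi})\le (2-\tfrac{2}{k})\qcorr_\epsilon(\ket{\psi})$. Let $\rho'$ achieve $\qcorr_\epsilon(\ket{\psi})$, so $\F(\ket{\psi}\bra{\psi},\rho')\ge 1-\epsilon$ and $\qcorr(\rho')=\qcorr_\epsilon(\ket{\psi})$. Applying Theorem~\ref{thm:qcc_intro} to $\rho'$ furnishes a purification $\ket{\Phi}\in\bigotimes_i(\h_i\otimes\h_{R_i})$ (player $i$ holding $\h_i\otimes\h_{R_i}$) with $\qcorr(\ket{\Phi})=\sum_i\lceil\log_2 s_i\rceil\le(2-\tfrac{2}{k})\qcorr(\rho')$, where $s_i=\rank(\Phi_i)$ is the Schmidt rank of $\ket{\Phi}$ across the cut $(\h_i\h_{R_i},\text{rest})$. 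It then remains to convert $\ket{\Phi}$, which lives in an enlarged space, into a genuine $k$-partite pure state $\ket{\phi}\in\bigotimes_i\h_i$ that is $k\epsilon$-close to $\ket{\psi}$ and whose per-player local ranks do not exceed $s_1,\dots,s_k$; this yields $\qcorr^{pure}_{k\epsilon}(\ket{\psi})\le\sum_i\lceil\log_2 s_i\rceil$ and closes the chain, the factor $\tfrac{k}{2k-2}$ being inherited directly from Theorem~\ref{thm:qcc_intro}.

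The conversion is the technical heart. Writing $\ket{\theta}=(\bra{\psi}\otimes I_R)\ket{\Phi}\in\h_R$, the fidelity bound gives $\|\ket{\theta}\|^2=\bra{\psi}\rho'\ket{\psi}\ge 1-\epsilon$, so $\ket{\Phi}$ is $\epsilon$-close to the product $\ket{\psi}\otimes\ket{\hat\theta}$ with $\ket{\hat\theta}=\ket{\theta}/\|\ket{\theta}\|$. Since $\ket{\Phi}$ has Schmidt rank $s_i$ across the $i$-th cut, the standard projection-fidelity inequality shows that the top $s_i$ eigenvalues of the reduced state $\psi_i\otimes\hat\theta_i$ of $\ket{\psi}\otimes\ket{\hat\theta}$ already carry weight $\ge 1-\epsilon$. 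The key algebraic point is that, because the eigenvalues of $\psi_i\otimes\hat\theta_i$ are the products $\lambda_a\mu_b$ with $\sum_b\mu_b=1$, the set of at most $s_i$ distinct indices $a$ occurring among the top $s_i$ products already satisfies $\sum_a\lambda_a\ge 1-\epsilon$; hence the top $s_i$ Schmidt coefficients of $\ket{\psi}$ itself across $(A_i,A_{-i})$ carry weight $\ge 1-\epsilon$, i.e. $r_i^{(\epsilon)}\le s_i$. Truncating $\ket{\psi}$ simultaneously to its top $s_i$ Schmidt vectors along every cut and renormalizing produces $\ket{\phi}$; a union bound over the $k$ cuts, exactly as in the upper-bound half of Theorem~\ref{thm:aQccforPure_intro}, bounds the total lost weight by $k\epsilon$, so $\F(\ket{\psi}\bra{\psi},\ket{\phi}\bra{\phi})\ge 1-k\epsilon$, while $\rank(\phi_i)\le s_i$ by construction.

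The main obstacle I anticipate is precisely this decoupling of the reference. The purification minimizing $\qcorr$ need not be the Uhlmann purification pointing towards $\ket{\psi}$, and its reference $\ket{\hat\theta}$ may be entangled across the players, which a priori threatens to inflate the local ranks beyond the $s_i$. The product-eigenvalue argument above is what tames this, since it lets the $\hat\theta_i$ factors be discarded cut by cut at no cost, and the delicate part is the per-cut fidelity accounting (loss $\epsilon$ per cut rather than $2\epsilon$) needed to obtain exactly the stated error $k\epsilon$ rather than a constant multiple of it.
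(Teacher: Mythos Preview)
Your proposal is correct and follows essentially the same route as the paper's proof: Theorem~\ref{thm:qcc_intro} to purify, Uhlmann to obtain the product reference $\ket{\psi}\otimes\ket{\hat\theta}$, the product--eigenvalue argument (which is precisely what the paper cites as Lemma~5.2 of \cite{JSWZ13}) to pass from the local approximate ranks of $\ket{\psi}\otimes\ket{\hat\theta}$ to those of $\ket{\psi}$, and finally the truncation/union-bound step of Theorem~\ref{thm:aQccforPure_intro} to land on $\qcorr^{pure}_{k\epsilon}(\ket{\psi})$. The fidelity-accounting concern you flag at the end (loss $\epsilon$ versus $2\epsilon$ per cut) is real but shared by the paper, which at the corresponding point in the proof of Theorem~\ref{thm:aQccforPure_intro} only obtains $\F\ge\sqrt{1-2\epsilon}$ and writes ``$\approx 1-\epsilon$''.
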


\section{Preliminaries}\label{sec:pre}
In this paper we consider multipartite systems. If a system has $k$ parties, we usually use $A_1$, ..., $A_k$ to denote them. Their spaces are $\h_1, ..., \h_k$, respectively. For notational convenience, we use $A_{-i}$ for $A_1...A_{i-1}A_{i+1}...A_k$, and use subscript $-i$ in other symbols (such as $\h_{-i}$) for a similar meaning.
\begin{trivlist}
\item{\bf Matrix theory}. For a natural number $n$ we let $[n]$ represent the set $\{1,2, \ldots, n\}$. We sometimes write $A=[A(x,y)]$ to mean that $A$ is a matrix with the $(x,y)$-th entry being $A(x,y)$. An operator $A$ is said to be {\em Hermitian} if $A^\dag = A$. A Hermitian operator $A$ is said to be \emph{positive semi-definite} (PSD) if all its eigenvalues are non-negative. %We will use the following fact.
%\begin{Fact}
%\label{fact:psd}
For any vectors $\ket{v_1}, \ldots, \ket{v_r}$ in $\mbC^n$, the $r
\times r$ matrix $M$ defined by $M(i,j) \defeq \qip{v_i}{v_j}$ is
positive semi-definite. The following definition of PSD-rank of a
matrix was proposed in \cite{FMP+12}.
\begin{Def}
    For a matrix $P\in \mbR_+^{n\times m}$, its PSD-rank, denoted $\pranktwo(P)$, is the minimum number $r$ such that there are PSD matrices $C_x,D_y\in \mbC^{r\times r}$ with $\tr(C_x D_y) = P(x,y)$, $\forall x \in [n], y\in [m]$.
\end{Def}
One can see that this corresponds to the special case of $k=2$ in Definition \ref{def:psd}. When $k=2$, we drop the superscript $(2)$ in Definition \ref{def:psd}, thus making it consistent with the above definition of PSD-rank of matrices.

\item{\bf Quantum information}.
A quantum state $\rho$ in Hilbert space $\h$, denoted $\rho \in \h$,
is a trace one positive semi-definite operator acting on $\h$. A
quantum state $\rho$ is called \emph{pure} if it is rank one, namely
$\rho = \ket{\psi}\bra{\psi}$ for some vector $\ket{\psi}$ of unit
$\ell_2$ norm; in this case, we often identify $\rho$ with
$\ket{\psi}$. For quantum states $\rho$ and $\sigma$, their fidelity
is defined as $\F(\rho,\sigma) \defeq
\tr(\sqrt{\sigma^{1/2}\rho\sigma^{1/2}})$. For $\rho, \ket{\psi} \in
\h$, we have $\F(\rho,\ket{\psi}\bra{\psi}) = \sqrt{\bra{\psi} \rho
\ket{\psi}}$. We define norm of $\ket{\psi}$ as $\|\ket{\psi}\|
\defeq \sqrt{\qip{\psi}{\psi}}$. For a quantum state $\rho \in  \h_A
\otimes \h_B$, we let $\tr_{\h_B} \rho$ represent the partial trace
of $\rho$ in $\h_A$ after tracing out $\h_B$. Let $\rho \in \h_A$
and $\ket{\phi} \in \h_A \otimes \h_B$ be such that $\tr_{\h_B}
\ket{\phi}\bra{\phi}  = \rho$, then we call $\ket{\phi}$ a {\em
purification} of $\rho$.
\begin{Def}
For a pure state $\ket{\psi} \in \h_A \otimes \h_B$, its {\em
Schmidt decomposition} is defined as
\begin{align*}
\ket{\psi} = \sum_{i=1}^r \sqrt{p_i} \cdot \ket{v_i} \otimes \ket{w_i},
\end{align*}
where $\ket{v_i}$'s are orthonormal states in $\h_A$, $\ket{w_i}$'s are orthonormal states in $\h_B$, and $p$ is a probability distribution.
\end{Def}
It is easily seen that $r$ is also equal to $\rank(\tr_{\h_A}
\ket{\psi}\bra{\psi} ) =\rank(\tr_{\h_B} \ket{\psi}\bra{\psi} ) $
and is therefore the same in all Schmidt decompositions of
$\ket{\psi}$.  This number is also referred to as the {\em Schmidt
rank} of $\ket{\psi}$ and denoted $\srank^{(A,B)}(\ket{\psi})$. The
superscript $(A,B)$ is to emphasize that the partition is between
$A$ and $B$. The next fact can be shown by considering Schmidt
decomposition of the pure states involved; see, for example,
Ex(2.81) of \cite{NC00}.
\begin{Fact}\label{fact:local}
Let $\ket{\psi}, \ket{\phi} \in \h_A \otimes \h_B$ be such that
$\tr_{\h_B} \ket{\phi}\bra{\phi} = \tr_{\h_B} \ket{\psi}\bra{\psi}
$. There exists a unitary operation $U$ on $\h_B$ such that
$(I_{\h_A} \otimes U) \ket{\psi} = \ket{\phi}$, where $I_{\h_A}$ is
the identity operator on $\h_A$.
\end{Fact}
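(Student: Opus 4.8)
The plan is to exploit the eigenbasis of the common reduced state $\rho \defeq \tr_{\h_B}\ket{\psi}\bra{\psi} = \tr_{\h_B}\ket{\phi}\bra{\phi}$ on $\h_A$ to put both $\ket{\psi}$ and $\ket{\phi}$ into Schmidt form with \emph{identical} Schmidt data on the $\h_A$ side, so that the only discrepancy lives in $\h_B$ and can be absorbed into a single unitary there. First I would fix one orthonormal eigenbasis $\{\ket{v_i}\}$ of $\rho$ with $\rho\ket{v_i} = p_i\ket{v_i}$ and $p_i \geq 0$. Since this basis spans $\h_A$, I write $\ket{\psi} = \sum_i \ket{v_i}\otimes\ket{a_i}$ for some vectors $\ket{a_i}\in\h_B$, not yet constrained. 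Computing the partial trace gives $\tr_{\h_B}\ket{\psi}\bra{\psi} = \sum_{i,j}\qip{a_j}{a_i}\ket{v_i}\bra{v_j}$, and matching this against $\rho = \sum_i p_i\ket{v_i}\bra{v_i}$ forces $\qip{a_j}{a_i} = p_i\delta_{ij}$. Thus the $\ket{a_i}$ are mutually orthogonal with $\|\ket{a_i}\|^2 = p_i$; for the indices with $p_i>0$ I set $\ket{w_i} \defeq p_i^{-1/2}\ket{a_i}$, obtaining the Schmidt form $\ket{\psi} = \sum_{i:p_i>0}\sqrt{p_i}\,\ket{v_i}\otimes\ket{w_i}$ with $\{\ket{w_i}\}$ orthonormal in $\h_B$.

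Next I would run the identical computation for $\ket{\phi}$ \emph{using the same basis} $\{\ket{v_i}\}$. Because $\ket{\phi}$ has the same reduced state $\rho$, the same argument yields $\ket{\phi} = \sum_{i:p_i>0}\sqrt{p_i}\,\ket{v_i}\otimes\ket{w_i'}$ with $\{\ket{w_i'}\}$ orthonormal in $\h_B$, and crucially with \emph{the same} coefficients $\sqrt{p_i}$ and \emph{the same} vectors $\ket{v_i}$ on the $\h_A$ side. The two orthonormal families $\{\ket{w_i}\}$ and $\{\ket{w_i'}\}$ have equal cardinality (the number of nonzero $p_i$), so I would extend each to a full orthonormal basis of $\h_B$ and define $U$ to be the unitary sending the $i$-th vector of the first basis to the $i$-th vector of the second; in particular $U\ket{w_i} = \ket{w_i'}$ for every $i$ with $p_i>0$. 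Then $(I_{\h_A}\otimes U)\ket{\psi} = \sum_i\sqrt{p_i}\,\ket{v_i}\otimes U\ket{w_i} = \sum_i\sqrt{p_i}\,\ket{v_i}\otimes\ket{w_i'} = \ket{\phi}$, as required.

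The single point demanding care, and the reason I would fix one eigenbasis at the outset rather than invoke ``the'' Schmidt decomposition of each state separately, is the possible degeneracy of $\rho$: when eigenvalues coincide the eigenbasis is not unique, so independently chosen Schmidt decompositions of $\ket{\psi}$ and $\ket{\phi}$ need not share the same $\ket{v_i}$, and the desired $U$ could fail to exist from the $\h_B$ parts alone. Committing to one basis $\{\ket{v_i}\}$ and expanding both states against it sidesteps this completely and guarantees that all the difference between the two states is confined to $\h_B$, where a single unitary reconciles the two orthonormal families. A minor bookkeeping detail is ensuring $\h_B$ is large enough to accommodate both families, which is automatic since both are orthonormal subsets of the same space and have the same size, so their complements have equal dimension and the extension to a unitary is always possible.
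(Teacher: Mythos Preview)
Your argument is correct and is precisely the Schmidt-decomposition approach the paper alludes to (the paper does not give its own proof but only cites Ex.~2.81 of \cite{NC00}). In particular, your care in fixing a single eigenbasis of the common reduced state to handle possible degeneracies is the right way to make the standard sketch rigorous.
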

We will also need another fundamental fact, shown by
Uhlmann~\cite{NC00}.
\begin{Fact}[Uhlmann, \cite{NC00}] \label{fact:uhlmann}
Let $\rho, \sigma \in \h_A$. Let $\ket{\psi} \in \h_A \otimes \h_B$
be a purification of $\rho$ and $\dim(\h_A) \leq \dim(\h_B)$. There
exists a purification $\ket{\phi} \in \h_A \otimes \h_B$ of $\sigma$
such that $\F(\rho, \sigma) = |\qip{\phi}{\psi}|$.
\end{Fact}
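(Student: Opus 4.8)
The plan is to recast the claim as a maximization of the pure-state overlap over all purifications of $\sigma$, and to show that this maximum both equals $\F(\rho,\sigma)$ and is attained for the \emph{given} fixed purification $\ket{\psi}$ of $\rho$. The two tools I would lean on are Fact~\ref{fact:local}, which says that any two purifications of the same state in $\h_A \otimes \h_B$ are related by a unitary on the reference space $\h_B$, and a reformulation of the fidelity as a trace norm. Concretely, since $(\sqrt{\rho}\sqrt{\sigma})^\dagger(\sqrt{\rho}\sqrt{\sigma}) = \sqrt{\sigma}\rho\sqrt{\sigma}$, the definition of $\F$ gives $\F(\rho,\sigma) = \tr\sqrt{\sqrt{\sigma}\rho\sqrt{\sigma}} = \|\sqrt{\rho}\sqrt{\sigma}\|_1$, the trace norm of $\sqrt{\rho}\sqrt{\sigma}$.

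Next I would fix an orthonormal basis of $\h_A$ and, using the embedding $\h_A \hookrightarrow \h_B$ afforded by $\dim\h_A \le \dim\h_B$, form the canonical vector $\ket{m} = \sum_i \ket{i}_A \ket{i}_B$. Since $(\sqrt{\rho}\otimes I)\ket{m}$ is a purification of $\rho$, Fact~\ref{fact:local} lets me write the given state as $\ket{\psi} = (\sqrt{\rho}\otimes U)\ket{m}$ for some unitary $U$ on $\h_B$, and every purification of $\sigma$ as $\ket{\phi} = (\sqrt{\sigma}\otimes U')\ket{m}$. Using the elementary identity $\bra{m}(M\otimes N)\ket{m} = \tr(MN^T)$, the overlap collapses to $\qip{\phi}{\psi} = \tr(\sqrt{\sigma}\sqrt{\rho}\, W)$, where $W$ is the transpose of the $\dim\h_A$-dimensional top-left corner of $U'^\dagger U$. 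With $\ket{\psi}$ (hence $U$) held fixed, letting $U'$ range over all unitaries on $\h_B$ makes $W$ range over all $\dim\h_A \times \dim\h_A$ unitaries (e.g. take $U'$ block-diagonal relative to the embedding). I would then invoke the standard variational characterization of the trace norm, $\max_{W\text{ unitary}}|\tr(\sqrt{\sigma}\sqrt{\rho}\, W)| = \|\sqrt{\sigma}\sqrt{\rho}\|_1 = \F(\rho,\sigma)$, whose maximizer is the unitary appearing in the polar decomposition of $\sqrt{\sigma}\sqrt{\rho}$. Choosing $U'$ to realize that maximizer produces a purification $\ket{\phi}$ of $\sigma$ with $|\qip{\phi}{\psi}| = \F(\rho,\sigma)$, as required.

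The main obstacle is the dimension gap $\dim\h_A \le \dim\h_B$: the freedom supplied by Fact~\ref{fact:local} is a unitary on the \emph{larger} space $\h_B$, whereas the quantity that actually couples to $\sqrt{\sigma}\sqrt{\rho}$ is only the $\dim\h_A$-dimensional corner of $U'^\dagger U$. I must verify that this corner can be set equal to any desired $\dim\h_A$-dimensional unitary, which it can by the block-diagonal choice above, so that the polar-decomposition maximizer is genuinely attainable and not merely approached. A secondary bookkeeping point is the transpose (and implicit complex conjugation) entering through the $\ket{m}$ identity; I would confirm it poses no obstruction, since the transpose of a unitary is again unitary and the trace norm is invariant under taking adjoints, so $W$ still sweeps the full unitary group and $\|\sqrt{\sigma}\sqrt{\rho}\|_1 = \|\sqrt{\rho}\sqrt{\sigma}\|_1 = \F(\rho,\sigma)$. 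These checks close the argument.
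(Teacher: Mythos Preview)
The paper does not supply a proof of this statement: it is recorded as a \emph{Fact} with a citation to Nielsen and Chuang~\cite{NC00}, and is simply invoked where needed. So there is no proof in the paper to compare against.

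That said, your argument is correct and is essentially the standard textbook proof of Uhlmann's theorem. The key steps --- writing every purification via a unitary on the reference system (your appeal to Fact~\ref{fact:local}), collapsing the overlap to $\tr(\sqrt{\sigma}\sqrt{\rho}\,W)$ through the maximally-entangled identity $\bra{m}(M\otimes N)\ket{m}=\tr(MN^T)$, and optimizing $W$ via the trace-norm variational formula $\|A\|_1=\max_{W\text{ unitary}}|\tr(AW)|$ --- are all sound. Your handling of the dimension gap is also right: with $U$ fixed and $U'$ free, a block-diagonal choice of $U'$ relative to the embedding $\h_A\hookrightarrow\h_B$ lets the $\dim\h_A$-corner of $U'^\dagger U$ hit any prescribed unitary, in particular the polar-decomposition maximizer. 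One minor phrasing point: as $U'$ ranges over all of $U(\h_B)$ the corner $W$ actually ranges over all \emph{contractions}, not just unitaries; but since your block-diagonal construction shows every unitary is attained and the trace-norm maximum over contractions is already achieved on the unitaries, this does not affect the conclusion.
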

The approximate version of Schmidt decomposition that will be
utilized in the present paper is as follows, which is called {\em
approximate Schmidt rank}.

\begin{Def}
\label{def:appsrank} Let $\epsilon>0$. Let $\ket{\psi}$ be a pure
state in $\h_A \otimes \h_B$. Define
%%%%%%%%% double column %%%%%%%%%%
%\begin{align}
%\srank^{(A,B)}_\epsilon(\ket{\psi}) \defeq \min\{ \srank^{(A,B)}(\ket{\phi}) : \ket{\phi} \in \h_A \otimes \h_B \\
%\nonumber \text{ and } \F(\ket{\psi}\bra{\psi},
%\ket{\phi}\bra{\phi}) \geq 1 - \epsilon\} .
%\end{align}
%%%%%%%%% double column %%%%%%%%%%
%%%%%%%%% single column %%%%%%%%%%
$$\srank^{(A,B)}_\epsilon(\ket{\psi}) \defeq \min\{ \srank^{(A,B)}(\ket{\phi}) : \ket{\phi} \in \h_A \otimes \h_B \text{ and } \F(\ket{\psi}\bra{\psi}, \ket{\phi}\bra{\phi}) \geq 1 - \epsilon\} .$$
%%%%%%%%% single column %%%%%%%%%%
\end{Def}

For multipartite pure states, there are no Schmidt decompositions in general. But a weaker statement holds.
\begin{Lem}\label{lem:qccforpure}
Suppose $\ket{\psi}$ is a pure state in
$\h_1\otimes\h_2\otimes\cdots\otimes\h_{k}$, and $\rho_i$ is the
reduced density matrix of $\ket{\psi}$ in $\h_i$. Denote $r_i =
\rank(\rho_i)$. If $\{\ket{\alpha_{ij}}: j\in [r_i]\}$ are the
eigenvectors of $\rho_i$ corresponding to nonzero eigenvalues, then
$\ket{\psi}$ can be expressed as
\begin{align*}
   \ket{\psi}=\sum_{j_1\in[r_1], ..., j_k\in [r_k]} a_{j_1...j_k}\ket{\alpha_{1j_1}}\otimes \cdots \otimes \ket{\alpha_{kj_k}},
\end{align*}
where $a_{j_1...j_k}$'s are complex coefficients.
\end{Lem}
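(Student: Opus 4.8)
The plan is to show that $\ket{\psi}$ already lives in the tensor product of the supports of the reduced density matrices, so that expanding in the eigenbases of the $\rho_i$ automatically drops every eigenvector with zero eigenvalue. Concretely, for each $i$ let $P_i$ be the orthogonal projector onto the span of $\{\ket{\alpha_{ij}} : j \in [r_i]\}$, i.e. onto the support of $\rho_i$, and let $Q_i = I - P_i$ be the projector onto $\ker\rho_i$. The heart of the argument is the claim that $(P_i \otimes I_{-i})\ket{\psi} = \ket{\psi}$ for every $i$.

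First I would establish this claim by a partial-trace computation on the complementary projection. Since $Q_i \otimes I_{-i}$ is itself a projector, its squared action on $\ket{\psi}$ equals
\begin{align*}
\big\|(Q_i \otimes I_{-i})\ket{\psi}\big\|^2 = \bra{\psi}(Q_i \otimes I_{-i})\ket{\psi} = \tr\!\big[(Q_i \otimes I_{-i})\ket{\psi}\bra{\psi}\big] = \tr(Q_i \rho_i),
\end{align*}
where the last step uses $\rho_i = \tr_{-i}(\ket{\psi}\bra{\psi})$ and the defining property of the partial trace. Because $Q_i$ projects onto $\ker\rho_i$ we have $Q_i\rho_i = 0$, hence $\tr(Q_i\rho_i)=0$, forcing $(Q_i \otimes I_{-i})\ket{\psi}=0$, which is exactly $(P_i \otimes I_{-i})\ket{\psi}=\ket{\psi}$.

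Next I would combine the single-party statements. The operators $P_i \otimes I_{-i}$ act nontrivially on disjoint tensor factors, so they pairwise commute and their product is $P_1 \otimes \cdots \otimes P_k$. Applying the claim successively for $i = 1, \ldots, k$ therefore gives $(P_1 \otimes \cdots \otimes P_k)\ket{\psi} = \ket{\psi}$. Finally, since $P_i = \sum_{j \in [r_i]} \ket{\alpha_{ij}}\bra{\alpha_{ij}}$, the range of $P_1 \otimes \cdots \otimes P_k$ is precisely the span of the product vectors $\{\ket{\alpha_{1j_1}} \otimes \cdots \otimes \ket{\alpha_{kj_k}} : j_i \in [r_i]\}$; so $\ket{\psi}$ is a linear combination of exactly these vectors, yielding the asserted expansion with complex coefficients $a_{j_1 \ldots j_k}$.

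I expect the only genuinely conceptual step to be the claim $(P_i \otimes I_{-i})\ket{\psi}=\ket{\psi}$, and even that reduces to the observation that the squared norm of the complementary projection equals $\tr(Q_i\rho_i)$, which vanishes because $Q_i$ annihilates $\rho_i$. The remaining steps — commuting projectors supported on distinct subsystems and reading off the spanning set — are routine, so no real obstacle should arise beyond bookkeeping of the tensor factors.
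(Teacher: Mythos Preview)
Your proof is correct and follows essentially the same approach as the paper's: both arguments show that $\ket{\psi}$ has no component outside the support of each $\rho_i$, and then read off the expansion in the eigenbasis. The paper phrases this as a short contradiction (a nonzero component in some $\ket{\alpha_{ij}}$ with $j>r_i$ would force $\rho_i$ to have support there), whereas you compute directly that $\|(Q_i\otimes I_{-i})\ket{\psi}\|^2 = \tr(Q_i\rho_i)=0$; these are the same idea, with your projector formulation arguably the cleaner packaging.
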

\begin{proof}
    For each $i\in [k]$, one can extend the vectors $\ket{\psi_{i1}}$, ..., $\ket{\psi_{ir_i}}$ to orthogonal basis $\ket{\psi_{i1}}$, ..., $\ket{\psi_{iD_i}}$ of $\h_i$, where $D_i$ is the dimension of $\h_i$. One can then decompose $\ket{\psi}$ according to the basis $\ket{\psi_{ij}}: i\in [k], j\in [D_i]$. The statement just says that $\ket{\psi}$ does not have any component in $\ket{\psi_{ij}}$, $\forall i$, $\forall j>r_i$. This is true because if $\ket{\psi}$ has a nonzero component in $\ket{\psi_{ij}}$ for some $j>r_i$, then when we compute the reduced density matrix of $\ket{\psi}$ in $\h_i$, we get $\rho_i$ with a positive component in $\ket{\psi_{ij}}\bra{\psi_{ij}}$. Thus $\ket{\psi_{ij}}$ is a eigenvector of $\rho_i$ with a nonzero eigenvalue, contradictory to our assumption.
\end{proof}

\end{trivlist}

\section{Quantum Correlation Complexity of Multipartite States}

In this section, we prove the results in Subsection \ref{intro:3} on quantum correlation complexity of multipartite states.

\begin{Thm-qccforpure}[Restated]
Suppose $\ket{\psi}$ is a pure state in $\h_{1}\otimes \cdots \otimes\h_{k}$, and $\rho_{j}$ is the reduced density matrices of $\ket{\psi}$ in $\h_{j}$. Then
\begin{align*}
   \qcorr(\ket{\psi}) = \sum_{j=1}^k \big\lceil \log_2 \rank(\rho_{j})\big\rceil.
\end{align*}
\end{Thm-qccforpure}

\begin{proof}
Let $r_j = \rank(\rho_j)$. By Lemma \ref{lem:qccforpure}, suppose
that $\ket{\psi} = \sum_{i_j \le r_j} a_{i_1 ... i_k}
\ket{\lambda_{i_1}}\cdots \ket{\lambda_{i_k}}$, where
$\ket{\lambda_{i_j}}$ is the $j$-th eigenvector of $\rho_j$. Then
the players can generate $\ket{\psi}$ by local operations on the
seed state $\ket{\psi'} = \sum_{i_j \le r_j} a_{i_1 ... i_k}
\ket{i_1}\cdots \ket{i_k}$. Since this state takes $\sum_{j=1}^k
\big\lceil \log_2 \rank(\rho_{j})\big\rceil$ number of qubits, we
have shown that $\qcorr(\ket{\psi}) \leq \sum_{j=1}^k \big\lceil
\log_2 \rank(\rho_{j})\big\rceil$.

For the other direction, let us assume the $k$ players generate the target $\ket{\psi}$ by local operations on an initial
seed state $\sigma$, whose size is $\qcorr(\ket{\psi})$. First note that to generate a pure state, it is enough to have a pure state as the seed, since otherwise every pure state in the support of the mixed seed state can give the same target $\ket{\psi}$.

Now define the reduced density matrix of $\sigma$ in the system
$A_j$ as $\sigma_j$, and assume that its rank is $s_j$. Then the
size of $\sigma$ is at least $\sum_{j=1}^k \big\lceil \log_2
s_j\big\rceil$, where the $j$-th summand bounds the number of qubits
for the $j$-th player's part of $\sigma$. Since local operations do
not increase Schmidt rank, we know that $s_j \ge r_j$. Thus
\[\qcorr(\ket{\psi}) \ge \sum_{j=1}^k \big\lceil \log_2 s_j \big\rceil
\ge \sum_{j=1}^k \big\lceil \log_2 r_j \big\rceil = \sum_{j=1}^k
\big\lceil \log_2 \rank(\rho_{j})\big\rceil.\]
\end{proof}

% Note that when $k=2$, the above theorem coincides with the one for the bipartite case in \cite{JSWZ13}.

As we mentioned earlier, generating a bipartite mixed quantum state
$\rho$ has the same cost as generating some purification of $\rho$
\cite{Zha12}. In multipartite cases, however, this does not hold any
more. The next theorem compares the quantum correlation complexity
of generating a mixed state $\rho$ and that of generating a
purification.

\begin{Thm-qcc_intro}[Restated]
Assume that $\rho$ is a quantum state in $\bigotimes_{i=1}^k \h_{i}$. Then we have
\begin{align*}
    \qcorr(\rho)\leq r(\rho) \leq \Big(2-\frac{2}{k}\Big)\qcorr(\rho),
\end{align*}
where $r(\rho)$ is the minimum $\qcorr(\ket{\psi})$ over all purifications $\ket{\psi}$ of $\rho$.
\end{Thm-qcc_intro}
\begin{proof}
First, we have $\qcorr(\rho)\leq\qcorr(\ket{\psi})$ for any purification $\ket{\psi}$ of $\rho$, thus $\qcorr(\rho) \leq
r(\rho)$.

Now for the other direction, suppose that a minimal seed state for
generating $\rho$ is $\sigma$ with $size(\sigma) = \qcorr(\rho)$.
Let $\sigma_i$ be the reduced density matrix of $\sigma$ in
$\h_{A_i}$, and suppose that $n_i$ is the number of qubits of
$\sigma_i$, so $\qcorr(\rho)=\sum_{i=1}^k n_i$. Without loss of
generality, assume that $n_1\leq \cdots \leq n_k$. Take any
purification $\ket{\theta}$ of $\sigma$ in $\h_{A_1} \otimes \cdots
\otimes \h_{A_{k-1}} \otimes \h_{A_k} \otimes \h_{A_k'}$, where
$A_k'$ is the ancillary system introduced by $A_k$. In each player's
part, the local operation can be assumed to be attaching some extra
system, performing a unitary operation, and then tracing out part of
system. Now if all players do not trace out any part of their
systems, and act on initial state $\ket{\theta}$ instead of
$\sigma$, then the same protocol results in a pure state
$\ket{\psi}$, which is a purification of $\rho$. In this way,
$\qcorr(\ket{\psi})\leq\qcorr(\ket{\theta})$.

According to Theorem \ref{thm:qccforpure}, we have
$\qcorr(\ket{\theta}) = \sum_{i=1}^k \lceil\log_2 r_i\rceil$, where $r_i$ is the dimension of $\sigma_i$ for $i\le k-1$, and $r_{k}$ is the dimension of $\tr_{\h_1\otimes\cdots\otimes\h_{k-1}} \ket{\theta}\bra{\theta}$. Note that
\[r_i \le 2^{n_i}, \ \forall i\le k-1,\ \text{ and } r_k \le 2^{n_1+\cdots+n_{k-1}},\]
where the last inequality uses the fact that $\ket{\theta}$ is a pure state. Thus, it follows that
\begin{align*}
    \qcorr(\ket{\psi})\leq\qcorr(\ket{\theta}) = \sum_{i=1}^k \lceil\log_2 r_i\rceil \leq 2\sum_{i=1}^{k-1} n_i \le \Big(2-\frac{2}{k}\Big)\sum_{i=1}^{k} n_i = \Big(2-\frac{2}{k}\Big) \qcorr(\rho).
\end{align*}
\end{proof}

In the above theorem, the left inequality is tight when $\rho$ is a pure state. The following proposition shows that the right inequality is also tight by giving an example of tripartite state $\rho$ with $\qcorr(\rho) = 3$ and $r(\rho) = 4$. Recall that the 3-qubit GHZ state is $\ket{GHZ} = \frac{1}{\sqrt{2}} (\ket{000} + \ket{111})$ and the 3-qubit W state is $\ket{W} = \frac{1}{\sqrt{3}} (\ket{001} + \ket{010} + \ket{100})$.
\begin{Prop}
    For $\rho_0 = \frac{1}{2}\ket{GHZ}\bra{GHZ}+\frac{1}{2}\ket{W}\bra{W}$, we have $\qcorr(\rho_0) = 3$ and $r(\rho_0)=4$.
\end{Prop}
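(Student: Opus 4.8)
The statement has two parts, $\qcorr(\rho_0)=3$ and $r(\rho_0)=4$. The first is easy in both directions; the second needs a matching upper and lower bound, and the lower bound $r(\rho_0)\ge 4$ is the real work. Throughout write $V=\mathrm{span}\{\ket{GHZ},\ket{W}\}$. Since $\ket{GHZ}$ and $\ket{W}$ have disjoint computational support, $\qip{GHZ}{W}=0$, so $\rho_0=\tfrac12 P_V$ is the maximally mixed state on the two‑dimensional space $V$; in particular $\rank(\rho_0)=2$, and every purification has an effectively two‑dimensional reference.

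\textbf{Proving $\qcorr(\rho_0)=3$.} For the upper bound I simply take the seed $\sigma=\rho_0$ itself: it is a three‑qubit state (one qubit per party), so its size is $3$, and the trivial (identity) local operations generate $\rho_0$. For the lower bound I argue no party can hold $0$ qubits. If party $i$ held nothing of the seed, its output subsystem would be produced from fresh local ancilla, independent of the seed and of the other parties, forcing the generated state to be a product across the cut $\{i\}\mid\text{rest}$. But $\rho_0$ is genuinely correlated across every single‑party cut; concretely I will check that its partial transpose across $\{1\}\mid\{2,3\}$ has a negative eigenvalue (the block on $\{\ket{011},\ket{100}\}$ becomes $\bigl(\begin{smallmatrix}0&1/4\\1/4&1/6\end{smallmatrix}\bigr)$, whose smaller eigenvalue $\tfrac{1-\sqrt{10}}{12}$ is negative), so $\rho_0$ is entangled, hence not product, across that cut, and by the permutation symmetry of $\rho_0$ across all three. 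Thus each $n_i\ge 1$ and $\qcorr(\rho_0)\ge 3$.

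\textbf{Upper bound $r(\rho_0)\le 4$.} By Theorem~\ref{thm:qccforpure}, $r(\rho_0)=\min_{\ket\psi}\sum_{j=1}^3\lceil\log_2 s_j\rceil$ over purifications $\ket\psi$ of $\rho_0$, where $s_j=\rank(\rho_j^\psi)$ is the rank of the reduced state of $\ket\psi$ on party $j$'s full space. I take the purification $\ket\psi=\tfrac{1}{\sqrt2}\bigl(\ket{GHZ}\ket{0}_R+\ket{W}\ket{1}_R\bigr)$ with the reference qubit $R$ held by party $1$. Regrouping by the $(R,\text{qubit }1)$ register shows the reduced state on party $1$ is supported on $\mathrm{span}\{\ket{00},\ket{11},\ket{01}+\ket{10}\}$ (three dimensions), while parties $2$ and $3$ hold bare qubits, so $(s_1,s_2,s_3)=(3,2,2)$ and $\sum\lceil\log_2 s_j\rceil=2+1+1=4$.

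\textbf{Lower bound $r(\rho_0)\ge 4$: the main obstacle.} Since $s_j\ge 2$ for every $j$ (a value $s_j=1$ would make $\ket\psi$ product across party $j$, hence $\rho_0$ product across qubit $j$, contradicting the previous paragraph), the sum equals $3$ exactly when all $s_j=2$; so it suffices to rule this out. Write any purification as $\ket\psi=\tfrac1{\sqrt2}\sum_{i=1}^2\ket{g_i}\ket{f_i}$ with $\{\ket{g_i}\}$ an orthonormal basis of $V$ and $\{\ket{f_i}\}$ orthonormal in the distributed reference $A_1'\otimes A_2'\otimes A_3'$. Fixing bases, set $G=\sum_i\ket{g_i}\bra{i}$, $F=\sum_i\ket{f_i}\bra{i}$ and the local ``Kraus'' families $A^{(j)}_k=(\bra{k}_{\h_{-j}})G$, $B^{(j)}_l=(\bra{l}_{A_{-j}'})F$, with matrix spans $\mathcal A_j=\mathrm{span}_k\{A^{(j)}_k\}$ and $\mathcal B_j=\mathrm{span}_l\{B^{(j)}_l\}$ inside $\mbC^{2\times2}$. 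A short computation rewrites the party‑$j$ reduced state as a Choi‑type image of the maximally entangled two‑qubit state, giving the clean identity $s_j=\rank(\rho_j^\psi)=\dim\,\mathrm{span}\{A^{(j)}_k (B^{(j)}_l)^T\}$. The GHZ/$W$ structure forces $\dim\mathcal A_j=3$ for each $j$ (for party $1$, $\mathcal A_1=\{M:M_{22}=\sqrt{2/3}\,M_{11}\}$), and this space has the crucial surjectivity property $\dim(\mathcal A_j\ket{p})=2$ for every nonzero $\ket p$. The plan is then: if any $B^{(j)}_l$ were invertible, $\mathcal A_j (B^{(j)}_l)^T$ alone would be $3$‑dimensional, so $s_j=2$ forces every $B^{(j)}_l$ to be rank $\le 1$; and if two of them had non‑parallel row vectors, surjectivity would push the span to dimension $4$. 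Hence $s_j=2$ forces all $B^{(j)}_l$ to share a common range, which is exactly the statement that $\ket{f_1}$ and $\ket{f_2}$ are both product across $A_j'\mid A_{-j}'$ with the \emph{same} $A_j'$‑factor. Imposing this for $j=1,2,3$ makes all single‑party marginals of $\ket{f_i}$ pure with $i$‑independent factors, forcing $\ket{f_1}=\ket{f_2}$ and contradicting orthonormality. I expect the delicate point to be exactly this last reformulation: the naive rank bound fails because partial traces can \emph{increase} rank, so the argument must go through the Kraus‑span identity and the surjectivity of $\mathcal A_j$, which together convert ``party $j$ stays rank $2$'' into ``party $j$ is decoupled from the reference,'' an event that cannot happen for all three parties at once.
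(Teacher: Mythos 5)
Your proof is correct, but both hard steps are handled quite differently from the paper. For $\qcorr(\rho_0)\ge 3$ the paper does not argue directly: it first proves $r(\rho_0)=4$ and then invokes Theorem~\ref{thm:qcc_intro} with $k=3$ to get $\qcorr(\rho_0)\ge \frac{3}{4}\,r(\rho_0)=3$; your direct argument (no party can hold zero seed qubits because $\rho_0$ has a negative partial transpose, hence is not a product state, across each single-party cut) is self-contained and equally valid. For the main lower bound $r(\rho_0)\ge 4$, the paper also reduces to showing that some party's reduced rank is at least $3$, but it proceeds by writing an arbitrary purification as $\tfrac1{\sqrt2}\ket{GHZ}\ket{u_0}+\tfrac1{\sqrt2}\ket{W}\ket{u_1}$, explicitly computing Alice's reduced state as a sum of three mutually orthogonal PSD pieces, and then splitting into cases according to whether $\ket{u_0},\ket{u_1}$ are product across Alice's ancilla versus the rest (the non-product case gives rank $\ge 3$ by orthogonality of supports; the product case is finished by a direct computation of $\rho'_{bc}$). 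Your route replaces this case analysis with the structural identity $s_j=\dim\,\mathrm{span}\{A^{(j)}_k(B^{(j)}_l)^T\}$ together with the observation that every nonzero vector in $\mathrm{span}\{\ket{GHZ},\ket{W}\}$ has full local rank on each qubit (your surjectivity of $\mathcal A_j$), which converts ``$s_j=2$'' into ``$\ket{f_1},\ket{f_2}$ factor across $A_j'\mid A_{-j}'$ with a common $A_j'$-factor,'' and then derives a contradiction with orthonormality of $\ket{f_1},\ket{f_2}$ when this holds for all three $j$. This is more systematic and makes the mechanism transparent (the reference cannot decouple from all three parties simultaneously), at the cost of more machinery; the paper's computation is more elementary but leans on a ``not difficult to verify'' step in its final case. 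Two cosmetic slips in your write-up: the three-dimensional support of party $1$'s reduced state in the upper-bound computation is $\mathrm{span}\{\tfrac12\ket{00}+\tfrac1{\sqrt6}\ket{11},\ket{01},\ket{10}\}$, not $\mathrm{span}\{\ket{00},\ket{11},\ket{01}+\ket{10}\}$ (the rank-$3$ conclusion is unaffected), and ``non-parallel row vectors'' should read ``non-parallel ranges'' --- as your very next clause correctly states, it is the common range of the rank-one $B^{(j)}_l$ that must be shared.
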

\begin{proof}
Since $\rho_0$ is a 3-qubit state, the three players can simply
share itself as the seed (and then do nothing), so $\qcorr(\rho_0)
\le 3$. We will next show that $r(\rho_0) = 4$, which implies
$\qcorr(\rho_0) \ge 3$ by Theorem \ref{thm:qcc_intro}. Therefore
$\qcorr(\rho_0) = 3$.

We now prove that $r(\rho_0) = 4$. Suppose the three qubits of
$\rho_0$ are possessed by \alice, \bob, and \charlie respectively.
One simple purification is
\begin{align*}
    \ket{\psi_0}=\frac{1}{\sqrt{2}}\ket{GHZ}\ket{1}+\frac{1}{\sqrt{2}}\ket{W}\ket{0},
\end{align*}
where the last qubit is introduced by one player, say, \charlie. Since $\ket{\psi_0}$ has only 4 qubits, $r(\rho_0) \le 4$. We shall prove that $r(\rho_0) \ge 4$.

Suppose the three qubits of $\rho_0$ are possessed by \alice, \bob,
and \charlie respectively. For convenience, we call these three
qubits the main system. Then an arbitrary purification of $\rho_0$
in $\h_{A}\otimes \h_{A_1}\otimes \h_B \otimes \h_{B_1}\otimes \h_C
\otimes \h_{C_1}$ could be expressed as
\begin{align*}
    \ket{\psi}=\frac{1}{\sqrt{2}}\ket{GHZ}\ket{u_0}+\frac{1}{\sqrt{2}}\ket{W}\ket{u_1},
\end{align*}
where $\ket{u_0}$ and $\ket{u_1}$ are orthogonal, and they are
composed by all the ancillary systems introduced by the three
players. Note that it is possible that some of the players do not
have ancillary systems. Without loss of generality, we suppose some
of the qubits in $\ket{u_i}$ belong to \alice. We trace out the two
qubits of \bob and \charlie in the main systems from $\ket{\psi}$,
and get
\begin{align}\label{eq:rhoa}
    \rho_a=&\tr_{\h_{B}\otimes\h_{C}}\ket{\psi}\bra{\psi}\\
    =&\left(\frac{1}{2}\ket{0}\ket{u_0}+\frac{1}{\sqrt{6}}\ket{1}\ket{u_1}\right)\left(\frac{1}{2}\bra{0}\bra{u_0}+\frac{1}{\sqrt{6}}\bra{1}\bra{u_1}\right)\\
    &+\frac{1}{4}\ket{1}\bra{1}\otimes\ket{u_0}\bra{u_0}+\frac{1}{3}\ket{0}\bra{0}\otimes\ket{u_1}\bra{u_1},
\end{align}
where the first qubit belongs to \alice, and the rest is all the ancillary systems combined. Continue to trace out \bob's ancillary system and \charlie's ancillary system, then we obtain \alice's reduced density matrix $\rho'_a$. Similarly, we can define $\rho'_b$ or $\rho'_c$, provided \bob or \charlie has a nontrivial part in $\ket{u_i}$.

We now prove that at least one of $\rho'_a$, $\rho'_b$ and $\rho'_c$ has a rank at least $3$. If this is the case, say $\rank(\rho_a') \ge 3$, then \alice needs at least 2 qubits. Since \bob and \charlie each needs at least 1 qubit, $\qcorr(\ket{\psi})\ge 4$.

If $\ket{u_i}$ is only at \alice's side, \ie, only \alice introduces
an ancillary system, then $\rho'_a=\rho_a$, which has rank $3$. Now
suppose that \bob also introduces an ancillary system. We claim that
if one of $\ket{u_0}$ and $\ket{u_1}$ is not a product state across
$(A,BC)$, then one of $\rho'_a$, $\rho'_b$ and $\rho'_c$ has rank at
least $3$. Indeed, suppose $\ket{u_0}$ is not a product state across
$(A,BC)$, then
$\rank(\tr_{\h_{B_1}\otimes\h_{C_1}}\ket{u_0}\bra{u_0})\geq2$. Note
that the three components in \eqref{eq:rhoa} are orthogonal, thus
$\rank(\rho'_a)\geq\rank(\tr_{\h_{B_1}\otimes\h_{C_1}}\ket{u_0}\bra{u_0})+\rank(\tr_{\h_{B_1}\otimes\h_{C_1}}\ket{u_1}\bra{u_1})$,
which means $\rank(\rho'_a)\geq3$. Therefore, we only need to take
care of the situation where $\ket{u_0}$ and $\ket{u_1}$ are product
states. Since they are orthogonal, without loss of generality we
could express them as $\ket{u_0}=\ket{u_{0,a}}\ket{v_{0,bc}}$ and
$\ket{u_1}=\ket{u_{1,a}}\ket{v_{1,bc}}$, where $\ket{u_{0,a}}$,
$\ket{u_{1,a}}\in \h_{A_1}$, $\ket{v_{0,bc}}$, $\ket{v_{1,bc}}\in
\h_{B_1}\otimes\h_{C_1}$, with either $\qip{u_{0,a}}{u_{1,a}} = 0$
or $\qip{u_{0,bc}}{u_{1,bc}} = 0$. In this way,
\begin{align*}
    \ket{\psi}=\frac{1}{2}(\ket{000}+\ket{111})\ket{u_{0,a}}\ket{v_{0,bc}}+\frac{1}{\sqrt{6}}(\ket{001}+\ket{010}+\ket{001})\ket{u_{1,a}}\ket{v_{1,bc}}.
\end{align*}
It is not difficult to verify that the rank of
$\rho'_{bc}=\tr_{\h_{A}\otimes\h_{A_1}}\ket{\psi}\bra{\psi}$ is at
least $3$. Meanwhile, it holds that
$\rank(\rho'_{bc})=\rank(\rho'_{a})$. Hence,
$\rank(\rho'_{a})\geq3$, and this completes the proof.
\end{proof}

Next we consider the other extreme, when $\rho$ is a multipartite
classical state, \ie, a multipartite probability distribution.
Recall that for a classical distribution $P$ on $\mcX$, we often
identify it with $\rho = \sum_x P(x)\ket{x}\bra{x}$. Also recall
that for a nonnegative tensor $P = [P(x_1,...,x_k)]_{x_1,...,x_k}$,
its PSD-rank $\prank{k}(P)$ is the minimum $r$ \st there are
$r\times r$ PSD matrices $C_{x_1}^{(1)}, ..., C_{x_k}^{(k)}\succeq
0$ with $P(x_1,...,x_k) = \sum_{i,j=1}^r C_{x_1}^{(1)}(i,j) \cdots
C_{x_k}^{(k)}(i,j)$.

\begin{Thm-qccP}[Restated]
Suppose $P = [P(x_1,...,x_k)]_{x_1,...,x_k}$ is a probability distribution on $\mcX_1\times \cdots \times \mcX_k$. Then we have
\begin{align*}
    \frac{k}{2k-2}\big\lceil\log_2\prank{k}(P)\big\rceil \leq \qcorr(P)\leq k\big\lceil\log_2\prank{k}(P)\big\rceil.
\end{align*}
\end{Thm-qccP}
\begin{proof}
We first prove the right inequality. Let $r=\prank{k}{(P)}$, then there exist positive semi-definite matrices $\{C_{x_i}^{(i)}: i\in [k], x_i\in \mcX_i\}$ \st for any $x = (x_1,...,x_k)$, it holds that $P(x) = \sum_{i,j=1}^r \prod_{t=1}^k C_{x_t}^{(t)} (i,j)$. For $i\in[r]$,
let $\ket{u_{x_t}^i}$ be the $i$-th column of $\sqrt{C_{x_t}^{(t)}}$. Then we have that $\qip{u_{x_t}^j}{u_{x_t}^i} = C_{x_t}^{(t)} (i,j)$. We now define a pure state $\ket{\psi} \in \bigotimes_{t=1}^r (\h_{A_t}\otimes\h_{A_t'}\otimes \h_{A_t''})$ as follows.
\begin{align*}
    \ket{\psi} = \sum_{i=1}^r \bigotimes_{t=1}^k \sum_{x_t} (\ket{x_t} \otimes \ket{x_t} \otimes \ket{u_{x_t}^i}).
\end{align*}
For each $t$, tracing out the second and the third registers gives
\begin{align*}
 & \tr_{\h_{A_t'} \otimes \h_{A_t''}} \ket{\psi} \bra{\psi} \\
= & \sum_{x_1,...,x_k} \ket{x_1}\bra{x_1} \otimes \cdots \otimes \ket{x_k}\bra{x_k} \left( \sum_{i,j=1}^r \prod_{t=1}^k \qip{u_{x_t}^j}{u_{x_t}^i} \right)   \\
= & \sum_{x_1,...,x_k} \ket{x_1}\bra{x_1} \otimes \cdots \otimes \ket{x_k}\bra{x_k} \left( \sum_{i,j=1}^r \prod_{t=1}^k  C_{x_t}^{(t)} (i,j) \right)\\
= & \sum_{x_1,...,x_k} P(x_1,...,x_k) \cdot \ket{x_1}\bra{x_1} \otimes \cdots \otimes \ket{x_k}\bra{x_k} .
\end{align*}
Thus $\ket{\psi}$ is actually a purification of $\rho$, and Theorem
\ref{thm:qcc_intro} implies that
$\qcorr(\rho)\leq\qcorr(\ket{\psi})$. Further note that
$\qcorr(\ket{\psi})\leq k\big\lceil\log_2 r\big\rceil$ by Theorem
\ref{thm:qccforpure}. We thus show that $\qcorr(\rho) \leq
k\big\lceil\log_2 r \big\rceil$.

For the left inequality, suppose $\ket{\psi'}$ is a pure state in
$\bigotimes_{i=1}^k (\h_{A_i} \otimes \h_{A_i'})$ that achieves the
optimum of $r(\rho)$ in Theorem \ref{thm:qcc_intro}, then this
theorem tells us that
\begin{align}\label{eq:fordis}
\qcorr(\rho) \geq \frac{k}{2k-2} \qcorr(\ket{\psi'}) = \frac{k}{2k-2} \sum_{i=1}^k \big\lceil\log_2 r_i\big\rceil \geq \frac{k\log_2(\prod_{i=1}^k r_i)}{2k-2},
\end{align}
where $r_i$ is the dimension of the reduced density matrix of $\ket{\psi'}$ on the $i$-th player. According to Lemma \ref{lem:qccforpure}, $\ket{\psi'}$
could be expressed as
\begin{align*}
   \ket{\psi'}=\sum_{i=1}^R a_{i}\ket{\alpha_i^1}\cdots \ket{\alpha_i^k}.
\end{align*}
Here $R = \prod_{j=1}^k r_j$, and for $i\in[R]$, $\ket{\alpha_i^j}\in\h_{A_j} \otimes \h_{A_j'}$. It should be pointed out
that for different $i$ and $i'$, $\ket{\alpha_i^j}$ and
$\ket{\alpha_{i'}^j}$ might be the same. In this way, $\ket{\psi'}$
could also be written as
\begin{align*}
    \ket{\psi'}=\sum_{i=1}^R \bigotimes_{j=1}^k \left(\sum_{x_j} \ket{x_j}\otimes\ket{u_{x_j}^i}\right).
\end{align*}
Recall that $\ket{\psi'}$ is a purification of $\rho$, so
\begin{align*}
\rho = & \tr_{\h_{A_1'} \otimes ... \otimes \h_{A_k'}} \ket{\psi'} \bra{\psi'} \\
 = & \sum_{x_1,...,x_k} \ket{x_1}\bra{x_1} \otimes \cdots \otimes\ket{x_k}\bra{x_k} \left( \sum_{i,i'=1}^R \prod_{j=1}^k\qip{u_{x_j}^{i'}}{u_{x_j}^i} \right)   \\
= & \sum_{x_1,...,x_k} P(x_1,...,x_k) \ket{x_1}\bra{x_1} \otimes \cdots \otimes\ket{x_k}\bra{x_k}  .
\end{align*}
Note that for any $x$, the $R\times R$ matrix $C_{x}$ with
$C_x(j,i)=\qip{u_{x}^j}{u_x^i}$ is positive. So by the definition of
PSD-rank, we have that $\prank{k}(P)\leq R = \prod_{j=1}^k r_j$.
Combining this result with Eq.\eqref{eq:fordis}, we get that
$\qcorr(\rho)\geq\frac{k}{2k-2}\big\lceil\log_2\prank{k}(P)\big\rceil$,
which completes the proof.
\end{proof}

\section{Quantum Communication Complexity of Multipartite States}

In this section, we study communication complexity of generating multipartite states and prove the results in Section \ref{intro:4}.

\begin{Thm-qcomm_pure}[Restated]
Suppose $\ket{\psi}$ is a $k$-partite pure state, and
$\m(\ket{\psi}) = \sum_{j=1}^k \lceil \log_2 \rank(\rho_i) \rceil $ where $\rho_i$ is $\ket{\psi}$ reduced to Player $i$'s space. Then
\begin{align*}
   \frac{1}{2}\m(\ket{\psi})\leq\qcomm(\ket{\psi}) \leq \frac{k-1}{k}\m(\ket{\psi}).
\end{align*}
\end{Thm-qcomm_pure}

\begin{proof}
Let us prove the upper bound first. By Theorem \ref{thm:qccforpure}, we can assume that the players can generate $\rho$ by local operations on the seed state $\sigma$ of size $\m(\ket{\psi})$. Suppose that Player $i$'s part of $\sigma$ has the largest number of qubits, then this player can prepare $\sigma$ and send to other players their parts. The communication cost is thus at most $\frac{k-1}{k} \m(\ket{\psi})$.

For the lower bound, suppose that Player $i$ and Player $j$  communicate $c_{ij}$ qubits in an optimal communication protocol generating $\ket{\psi}$, starting from a product state. Considering the linearity of quantum operations and that the target state is pure, we can assume without loss of generality that the seed state is also pure. Denote by $r_i = \rank(\rho_i)$ where $\rho_i$ is $\ket{\psi}$ reduced to Player $i$'s space. Since exchanging $r$ qubits can only increase the Schmidt rank between Player $i$ and the rest of the players by at most $2^r$, we have that
\begin{align*}
    r_i &\leq 2^{\sum_{j:j\ne i} c_{ij}}.
\end{align*}
Putting communication among all pairs of players together, we have
\begin{align*}
\qcomm(\ket{\psi}) = \sum_{\{i,j\}: i\ne j} c_{ij} = \frac{1}{2} \sum_i \sum_{j: j\ne i} c_{ij}
\ge \frac{1}{2} \sum_i \lceil \log_2 r_i \rceil
\ge \frac{1}{2}\m(\ket{\psi}).
\end{align*}
\end{proof}

Both bounds in the above theorem are tight. For the upper bound,
consider the 3-qubit GHZ state $\ket{\psi} = \frac{1}{\sqrt{2}}
(\ket{000} + \ket{111})$ shared by \alice, \bob and \charlie. It is
not hard to see that $\m(\ket{\psi}) = 3$ and
$\qcomm(\ket{\psi})=2$. For the lower bound, consider an EPR pair
$\ket{\psi} = \frac{1}{\sqrt{2}} (\ket{00} + \ket{11})$ shared by
two players. It has $\m(\ket{\psi}) = 2$ and $\qcomm(\ket{\psi}) =
1$.

\medskip
In Theorem \ref{thm:qcc_intro} and its later comment on tightness of
the bounds we have seen that the correlation complexity of a mixed
quantum state $\rho$ is in general different than that of (even a
best) purification of $\rho$. The next theorem shows that for
communication complexity, generating a mixed quantum state is the
same as generating a purification of it.

\begin{Thm-qcomm_topure}[Restated]
For any $k$-partite quantum state $\rho$,
\begin{align*}
    \qcomm(\rho) = \min \{\qcomm(\ket{\psi}): \ket{\psi} \text{ is a purification of }\rho\}.
\end{align*}
\end{Thm-qcomm_topure}
\begin{proof}
It is clear that for any purification $\ket{\psi}$, $\qcomm(\rho) \le \qcomm(\ket{\psi})$ since one can just generate $\ket{\psi}$ and then trace out some part to get $\rho$.

For the other direction, suppose $r = \qcomm(\rho)$, then starting
from $\otimes_{i=1}^k \ket{0}$, the players can generate $\rho$ by
local operations and communicating $r$ qubits. Here all local
operations can be assumed to be first to append some ancilla and
then perform a unitary operation and finally trace out some parts.
If the players do not trace out any part, then at the end of the
protocol, they would have a pure state as a purification
$\ket{\psi}$ of $\rho$. Thus $\qcomm(\rho) \ge \qcomm(\ket{\psi})$.
\end{proof}

The following result compares $\qcorr(\rho)$ and $\qcomm(\rho)$ for
general multipartite quantum states.
\begin{Thm-comm-vs-corr}[Restated]
For any $k$-partite quantum state $\rho$, it holds that
\begin{align*}
    \frac{k}{k-1}\qcomm(\rho)\leq\qcorr(\rho)\leq2\qcomm(\rho).
\end{align*}
\end{Thm-comm-vs-corr}
\begin{proof}
The left inequality can be easily proved using the same argument as
the lower bound proof of Theorem \ref{thm:qcomm_pure}.

For the right inequality, according to Theorem
\ref{thm:qcomm_topure}, we could find a purification $\ket{\psi}$ of
$\rho$ in $\bigotimes_{i=1}^k (\h_{A_i}\otimes\h_{A_i'})$ such that
$\qcomm(\rho) = \qcomm(\ket{\psi})$. Then Theorem
\ref{thm:qcomm_pure} indicates that
$\qcomm(\ket{\psi})\geq\frac{1}{2}\qcorr(\ket{\psi})$. Combing these
results with Theorem \ref{thm:qcc_intro}, we obtain that
\begin{align*}
\qcorr(\rho)\leq\qcorr(\ket{\psi}) \leq 2\qcomm(\ket{\psi})=2\qcomm(\rho).
\end{align*}
\end{proof}

\section{Approximate Quantum Correlation Complexity of Bipartite States}
In this section, we study the correlation complexity of generating
bipartite states approximately, and prove the results mentioned in
Section \ref{intro:1}. We will first consider two extreme cases:
quantum pure states and classical distributions, and then general
quantum mixed states.

\subsection*{Quantum pure states.}
For quantum pure states, we will first show that the following two
approximations are equivalent. Recall that for a state $\rho\in
\h_A\otimes \h_B$,
\[
    \qcorr_\epsilon(\rho) = \min\{\qcorr(\rho'): \rho' \in  \h_{AB}, \F(\rho, \rho') \geq 1 - \epsilon\},
\]
and
\begin{align*}
    \qcorr'_\epsilon(\rho) & = \min\big\{ \qcorr_\epsilon^{pure}(\ket{\varphi}): \ket{\varphi}\in  \h_{A_1ABB_1},  \rho= \tr_{\h_{A_1}\otimes\h_{B_1}} \ket{\varphi}\bra{\varphi} \big\} \\
    & = \min\big\{ \big\lceil\log_2\srank_\epsilon(\ket{\varphi})\big\rceil: \ket{\varphi}\in  \h_{A_1ABB_1},  \rho= \tr_{\h_{A_1}\otimes\h_{B_1}} \ket{\varphi}\bra{\varphi} \big\}.
\end{align*}
We will need a result in \cite{JSWZ13} which says that pure states
can be optimally approximated by given other \emph{pure} state.
\begin{Lem}[\cite{JSWZ13}]\label{lem:approx-pure}
    For a bipartite pure state $\ket{\psi}$ with Schmidt coefficients $\lambda_1 \ge ... \ge \lambda_N$, $\qcorr_\epsilon(\ket{\psi}) = \qcorr_\epsilon^{pure}(\ket{\psi}) = \lceil \log_2 r \rceil$, where $r$ is the minimum integer \st $\sum_{i=1}^r \lambda_i^2 \ge (1-\epsilon)^2$.
\end{Lem}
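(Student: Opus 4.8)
The plan is to prove the chain $\qcorr_\epsilon^{pure}(\ket{\psi}) = \lceil\log_2 r\rceil$ and $\qcorr_\epsilon(\ket{\psi}) = \qcorr_\epsilon^{pure}(\ket{\psi})$ separately. The inequality $\qcorr_\epsilon(\ket{\psi}) \le \qcorr_\epsilon^{pure}(\ket{\psi})$ is immediate, since any pure approximant $\ket{\phi}$ is a legal mixed approximant; so the whole statement reduces to two tasks: computing the pure-to-pure quantity, and showing that allowing a mixed approximant cannot help.

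For the first task I would use the bipartite fact $\qcorr(\ket{\phi}) = \lceil\log_2\srank(\ket{\phi})\rceil$, so that $\qcorr_\epsilon^{pure}(\ket{\psi})$ is the least $\lceil\log_2 s\rceil$ for which some Schmidt-rank-$s$ state $\ket{\phi}$ achieves $\F(\ket{\psi}\bra{\psi},\ket{\phi}\bra{\phi}) = |\qip{\psi}{\phi}| \ge 1-\epsilon$. Identifying $\ket{\psi}$ with its coefficient matrix (singular values $\lambda_i$) and $\ket{\phi}$ with a unit-Frobenius-norm matrix of rank $\le s$, the overlap $\qip{\psi}{\phi}$ becomes a Hilbert--Schmidt inner product, and the Eckart--Young theorem gives $\max|\qip{\psi}{\phi}| = \sqrt{\sum_{i\le s}\lambda_i^2}$, attained by truncating the Schmidt expansion to its top $s$ terms and renormalizing. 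Hence the constraint is feasible exactly when $\sum_{i\le s}\lambda_i^2 \ge (1-\epsilon)^2$, i.e. when $s \ge r$, yielding $\qcorr_\epsilon^{pure}(\ket{\psi}) = \lceil\log_2 r\rceil$. (This is just Definition~\ref{def:appsrank} together with the $\srank_\delta$ formula at $\delta = 2\epsilon-\epsilon^2$.)

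For the second task I must show $\qcorr_\epsilon^{pure}(\ket{\psi}) \le \qcorr_\epsilon(\ket{\psi})$. Take an optimal mixed approximant $\rho'$ with $\qcorr(\rho') = c := \qcorr_\epsilon(\ket{\psi})$ and $\bra{\psi}\rho'\ket{\psi} \ge (1-\epsilon)^2$. Using the bipartite identity $\qcorr(\rho') = \min\{\lceil\log_2\srank(\ket{\phi'})\rceil : \ket{\phi'} \text{ purifies } \rho'\}$, fix a purification $\ket{\phi'} \in \h_{A_1ABB_1}$ with $\srank^{(AA_1:BB_1)}(\ket{\phi'}) \le 2^c$. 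The key move is to decompose $\rho'$ by a \emph{product} basis of the ancilla: for orthonormal bases $\{\ket{p}\}$ of $\h_{A_1}$ and $\{\ket{q}\}$ of $\h_{B_1}$, set $\ket{c_{pq}} = (\bra{p}_{A_1}\otimes\bra{q}_{B_1})\ket{\phi'}$, so that $\rho' = \sum_{p,q}\ket{c_{pq}}\bra{c_{pq}}$ with $\sum_{p,q}\|\ket{c_{pq}}\|^2 = 1$. Since $\ket{c_{pq}}$ arises from $\ket{\phi'}$ by a local contraction on Alice's side and a local contraction on Bob's side, its Schmidt rank across $(A:B)$ is at most that of $\ket{\phi'}$ across $(AA_1:BB_1)$, hence $\le 2^c$. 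Finally, from $\bra{\psi}\rho'\ket{\psi} = \sum_{p,q}\|\ket{c_{pq}}\|^2\,|\qip{\psi}{\hat c_{pq}}|^2 \ge (1-\epsilon)^2$ (a convex combination, writing $\ket{\hat c_{pq}}$ for the normalization), some term $\ket{\hat c_{p^*q^*}}$ has $\F(\ket{\psi},\ket{\hat c_{p^*q^*}}) \ge 1-\epsilon$ while $\srank(\ket{\hat c_{p^*q^*}}) \le 2^c$. This pure approximant gives $\qcorr_\epsilon^{pure}(\ket{\psi}) \le c = \qcorr_\epsilon(\ket{\psi})$, closing the loop.

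The hard part will be the second task, and specifically the decomposition step. The naive attempt---to read off a good low-Schmidt-rank pure state from the spectral decomposition of $\rho'$---fails, because degenerate eigenspaces permit maximally entangled eigenbases whose Schmidt ranks are unrelated to $\qcorr(\rho')$ (for instance $\rho' = I/4$ has $c=0$ yet a Bell eigenbasis). The real content is the realization that one should instead expand $\rho'$ in a \emph{product} ancilla basis, so that each component $\ket{c_{pq}}$ is a local (hence Schmidt-rank-nonincreasing) image of the single low-Schmidt-rank purification $\ket{\phi'}$; the averaging argument then extracts one component of high fidelity.
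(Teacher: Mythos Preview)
The paper does not prove this lemma; it is quoted from \cite{JSWZ13} without argument. Your proof is correct and self-contained.

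For the first task, your Eckart--Young computation of $\qcorr_\epsilon^{pure}(\ket{\psi})$ is exactly the approach of \cite{JSWZ13}, as the present paper itself signals in the introduction by writing the answer as $\rank_{2\epsilon-\epsilon^2}(A)$ for the coefficient matrix $A$ (and by carrying the reference \cite{EY36} in the bibliography).

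For the second task, your product-ancilla-basis decomposition $\rho' = \sum_{p,q}\ket{c_{pq}}\bra{c_{pq}}$ plus averaging is a clean direct argument, and your diagnosis of why the spectral decomposition fails is on point. It is worth noting that the paper later proves, in Proposition~\ref{lem:qcc1}, the more general identity $\qcorr_\epsilon(\rho)=\qcorr'_\epsilon(\rho)$ for arbitrary mixed $\rho$ via Uhlmann's theorem; specialized to pure $\rho=\ket{\psi}\bra{\psi}$ (whose purifications in $\h_{A_1ABB_1}$ are tensor products $\ket{\psi}\otimes\ket{\theta}$, so that the minimum of $\srank_\epsilon$ over purifications is just $\srank_\epsilon(\ket{\psi})$), that proposition recovers your inequality as well. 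Your route is more elementary in that it avoids Uhlmann and the auxiliary monotonicity lemma, at the cost of being specific to pure targets.
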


\begin{Prop}\label{lem:qcc1}%{\em (restatement of Lemma \ref{thm:qcc1_intro})}
For any quantum state $\rho$ in $\h_A \otimes \h_{B}$,
$\qcorr_\epsilon(\rho)=\qcorr'_\epsilon(\rho)$.
\end{Prop}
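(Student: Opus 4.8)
The plan is to prove the two inequalities $\qcorr_\epsilon(\rho) \le \qcorr'_\epsilon(\rho)$ and $\qcorr'_\epsilon(\rho) \le \qcorr_\epsilon(\rho)$ separately, using Uhlmann's theorem (Fact~\ref{fact:uhlmann}) to pass between purifications of $\rho$ and purifications of an approximating mixed state, together with the bipartite identity $\qcorr(\sigma) = \min\{\qcorr(\ket{\psi}): \ket{\psi} \text{ purifies } \sigma\}$ recalled above and the monotonicity of fidelity under partial trace.

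For the direction $\qcorr_\epsilon(\rho)\le\qcorr'_\epsilon(\rho)$, I would start from an optimal pair realizing $\qcorr'_\epsilon(\rho)$: a purification $\ket{\varphi}\in\h_{A_1ABB_1}$ of $\rho$ together with a pure state $\ket{\varphi'}$ satisfying $\F(\ket{\varphi}\bra{\varphi},\ket{\varphi'}\bra{\varphi'})\ge 1-\epsilon$ and $\qcorr(\ket{\varphi'}) = \qcorr'_\epsilon(\rho)$. Set $\rho' = \tr_{\h_{A_1}\otimes\h_{B_1}}\ket{\varphi'}\bra{\varphi'}\in\h_{AB}$. Since the partial trace is a quantum channel, fidelity is monotone, so $\F(\rho,\rho')\ge\F(\ket{\varphi}\bra{\varphi},\ket{\varphi'}\bra{\varphi'})\ge 1-\epsilon$; and because $\ket{\varphi'}$ is a purification of $\rho'$, the bipartite identity gives $\qcorr(\rho')\le\qcorr(\ket{\varphi'})=\qcorr'_\epsilon(\rho)$. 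Hence $\rho'$ is an admissible $\epsilon$-approximation of $\rho$ and $\qcorr_\epsilon(\rho)\le\qcorr(\rho')\le\qcorr'_\epsilon(\rho)$.

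For the reverse direction, I would take an optimal $\rho'\in\h_{AB}$ with $\F(\rho,\rho')\ge 1-\epsilon$ and $\qcorr(\rho')=\qcorr_\epsilon(\rho)$, and use the bipartite identity to pick a purification $\ket{\psi'}\in\h_{A_1ABB_1}$ of $\rho'$ with $\qcorr(\ket{\psi'})=\lceil\log_2\srank^{(A_1A,BB_1)}(\ket{\psi'})\rceil=\qcorr(\rho')$. The goal is then to produce a purification $\ket{\varphi}$ of $\rho$ in the same space $\h_{A_1ABB_1}$ that is $\epsilon$-close to $\ket{\psi'}$. Viewing $\h_{AB}$ as the system and $\h_{A_1}\otimes\h_{B_1}$ as the purifying register (enlarged if necessary so its dimension is at least $\dim\h_{AB}$), Uhlmann's theorem supplies a purification $\ket{\varphi}$ of $\rho$ with $|\qip{\varphi}{\psi'}|=\F(\rho,\rho')\ge 1-\epsilon$, i.e. $\F(\ket{\varphi}\bra{\varphi},\ket{\psi'}\bra{\psi'})\ge1-\epsilon$. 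Since $\ket{\psi'}$ is a pure state $\epsilon$-close to $\ket{\varphi}$, the definition of approximate Schmidt rank (Definition~\ref{def:appsrank}) gives $\srank^{(A_1A,BB_1)}_\epsilon(\ket{\varphi})\le\srank^{(A_1A,BB_1)}(\ket{\psi'})$, so $\qcorr_\epsilon^{pure}(\ket{\varphi})\le\qcorr(\ket{\psi'})=\qcorr_\epsilon(\rho)$. As $\ket{\varphi}$ purifies $\rho$, we conclude $\qcorr'_\epsilon(\rho)\le\qcorr_\epsilon^{pure}(\ket{\varphi})\le\qcorr_\epsilon(\rho)$.

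The step I expect to require the most care is the reverse direction's use of Uhlmann: the quantity $\qcorr(\ket{\psi'})$ is measured by the Schmidt rank across the fixed bipartition $(A_1A,BB_1)$, whereas Uhlmann produces $\ket{\varphi}$ only as an abstract vector in $\h_{AB}\otimes(\h_{A_1}\otimes\h_{B_1})$ with a guaranteed overlap. I must verify that the purifying ancilla is distributed as $A_1$ to \alice and $B_1$ to \bob and is large enough for Fact~\ref{fact:uhlmann} to apply, and that the (partition-independent) overlap bound correctly transfers to a bound on the \emph{approximate} Schmidt rank in exactly that bipartition. Once these bookkeeping points are settled, the comparison of Schmidt ranks across $(A_1A,BB_1)$ is immediate and the two inequalities combine to give $\qcorr_\epsilon(\rho)=\qcorr'_\epsilon(\rho)$.
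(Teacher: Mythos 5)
Your proposal is correct and follows essentially the same route as the paper: one direction pushes an optimal approximating pure state down to $\h_{AB}$ via monotonicity of fidelity under partial trace, and the other lifts an optimal approximating mixed state to an optimal purification and applies Uhlmann's theorem to obtain a close purification of $\rho$, bounding the approximate Schmidt rank across $(A_1A,BB_1)$. The bookkeeping points you flag (enlarging the ancilla for Uhlmann and keeping both purifications in a common space $\h_{A_1ABB_1}$) are handled the same way in the paper's proof.
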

\begin{proof}
$\qcorr_\epsilon(\rho) \ge \qcorr'_\epsilon(\rho)$: Suppose that
$\rho' \in  \h_A \otimes \h_B$, $\F(\rho, \rho') \geq 1 - \epsilon$
and $\qcorr_\epsilon(\rho) = \qcorr(\rho')$. By Lemma 2.2 of
\cite{JSWZ13}, there is a purification $\ket{\psi}$ in $A_1ABB_1$ of
$\rho'$ \st $\qcorr(\rho') =
\big\lceil\log_2\srank(\ket{\psi})\big\rceil$. By Uhlmann's theorem,
there exists a purification of $\rho$ in $A_1ABB_1$, say
$\ket{\alpha}$, and
$\F(\ket{\alpha}\bra{\alpha},\ket{\psi}\bra{\psi}) = \F(\rho, \rho')
\geq 1-\epsilon$. (We assume that the $\ket{\alpha}$ and
$\ket{\psi}$ are in the same extended space $\h_{A_1ABB_1}$ since
otherwise we can use the union of the two extended spaces.) Thus
\[\qcorr'_\epsilon(\rho) \le \big\lceil\log_2\srank_\epsilon(\ket{\alpha})\big\rceil \le \big\lceil\log_2\srank(\ket{\psi})\big\rceil = \qcorr_\epsilon(\rho).\]

$\qcorr_\epsilon(\rho) \le \qcorr'_\epsilon(\rho)$: Suppose
$\qcorr'_\epsilon(\rho) =
\big\lceil\log_2\srank_\epsilon(\ket{\varphi})\big\rceil$, and
$\rho= \tr_{\h_{A_1}\otimes\h_{B_1}} \ket{\varphi}\bra{\varphi}$.
Then one can find another pure state $\ket{\beta}$ in $A_1ABB_1$,
such that $\big\lceil\log_2\srank(\ket{\beta})\big\rceil =
\big\lceil\log_2\srank_\epsilon(\ket{\varphi})\big\rceil =
\qcorr'_\epsilon(\rho) $, and
$\F(\ket{\beta}\bra{\beta},\ket{\varphi}\bra{\varphi})\geq1-\epsilon$.
Since partial trace does not decrease the fidelity \cite{NC00}, we
know that $\F(\tr_{\h_{A_1}\otimes\h_{B_1}}\ket{\beta}\bra{\beta},
\rho)\geq 1-\epsilon$. By the definition of $\qcorr_\epsilon(\rho)$,
it holds that $\qcorr'_\epsilon(\rho)\geq\qcorr_\epsilon(\rho)$,
which completes the proof.
\end{proof}

\subsection*{Classical distributions.} Next we consider to approximate classical distributions.
Recall that Lemma \ref{lem:approx-pure} implies that the most
efficient approximate generation of a pure state can be achieved by
another pure state. In the same spirit, the following theorem shows
that the most efficient approximate generation of a classical state
can be achieved by another classical state, and the correlation
complexity is completely determined by the approximate PSD-rank.

\begin{Thm-aqcorr_P}[Restated]
For any classical state $P = [P(x,y)]_{x,y}$, \[\qcorr_\epsilon(P) =
\qcorr_\epsilon^{\cla}(P) = \lceil \log_2 \apprank{\epsilon}(P)
\rceil.\]
\end{Thm-aqcorr_P}

\begin{proof}
     For the first equality, we only need to prove that $\qcorr_\epsilon(P) \ge \qcorr_\epsilon^{\cla}(P)$ (since the other direction holds by definition). Given an approximation $\rho'$ to $P$ with $F(P,\rho') \ge 1-\epsilon$ and $\size(\rho') = \qcorr_\epsilon(P)$, we measure $\rho'$ in the computational basis of $\h_A\otimes \h_B$ and get a probability distribution $P'$. Note that the same measurement does not change $P$. Since no operation can decrease the fidelity of two states, we have $F(P,P') \ge F(P,\rho') \ge 1-\epsilon$.

    The second equality is immediate from their definitions.
\end{proof}

\subsection*{General quantum mixed states.} We now turn to the case of general bipartite $\sigma$. By combining
Theorem 1.2 of \cite{JSWZ13} and Proposition \ref{lem:qcc1}, we have
the following characterization of $\qcorr_{\epsilon}(\sigma)$.

\begin{Thm-aQCorrgeneral}[Restated]
Let $\sigma$ be an arbitrary quantum state in $\h_A \otimes \h_B$,
and $0<\epsilon<1$. Then $\qcorr_{\epsilon}(\sigma) = \lceil \log_2
r \rceil $, where $r$ is the minimum integer \st there exist a
collection of matrices, $\{A_x\}$'s and $\{B_y\}$'s of the same
column number $l\ge r$, satisfying the following conditions.

\begin{enumerate}
    \item The matrices relate to $\sigma$ by the following equation.
    \begin{align}\label{eq:requireforgeneral}
        \sigma=\sum_{x,x'; y,y'} \ket{x}\bra{x'} \otimes \ket{y}\bra{y'} \cdot \tr \Big( (A_{x'}^\dag A_x)^T (B_{y'}^\dag B_y) \Big)\Big).
    \end{align}

    \item Denoting the $i$-th column of any matrix $M$ by $\ket{M(i)}$, then
    \begin{equation}\label{eq:requireforgeneral2}
        \sum_x\qip{A_x(i)}{A_x(j)}=\sum_y\qip{B_y(i)}{B_y(j)}=0,
    \end{equation}

    \item
    \begin{align}\label{eq:requireforgeneral3}
        \sum_{i=1}^r\Big(\sum_x\qip{A_x(i)}{A_x(i)}\Big)\Big(\sum_y\qip{B_y(i)}{B_y(i)}\Big)\geq1-\epsilon,
    \end{align}
\end{enumerate}
\end{Thm-aQCorrgeneral}
\begin{proof}
Suppose $\qcorr_{\epsilon}(\sigma) = \lceil \log_2
\srank_\epsilon(\ket{\psi}) \rceil$ where $\ket{\psi}$ is a
purification of $\sigma$, given by Proposition \ref{lem:qcc1}. Put
$t = \srank_{\epsilon}(\ket{\psi})$. Suppose the Schmidt
decomposition of $\ket{\psi}$ is
\begin{equation}\label{eq:forgeneral2}
\ket{\psi} = \sum_{i=1}^s \Big(\sum_{x} \ket{x} \otimes
\ket{v_x^i}\Big) \otimes \Big(\sum_{y} \ket{y} \otimes
\ket{w_y^i}\Big),
\end{equation}
thus the Schmidt coefficients are
$a_i=\sum_x\qip{v_x^{i}}{v_x^{i}}\sum_y\qip{w_y^{i}}{w_y^{i}}$,
$1\leq i\leq s$. For each $x$, set matrices $A_x \defeq
(\ket{v_x^1}, \ket{v_x^2}, \ldots, \ket{v_x^s})$. Similarly, for
each $y$ set matrices $B_y \defeq (\ket{w_y^1}, \ket{w_y^2}, \ldots,
\ket{w_y^s})$. Then it can be verified that
Eq.\eqref{eq:requireforgeneral} holds. In addition, the
orthogonality of $\sum_{x} \ket{x} \otimes \ket{v_x^i}$ (and that of
$\sum_{y} \ket{y} \otimes \ket{w_y^i}$) for different $i$'s
translates to Eq.\eqref{eq:requireforgeneral2}.

Without loss of generality, we assume that the coefficients
$a_{i}$'s are in the decreasing order. By Lemma 5.1 of
\cite{JSWZ13}, we have that
\begin{equation*}
\sum_{i=1}^t\Big(\sum_x\qip{A_x(i)}{A_x(i)}\Big)\Big(\sum_y\qip{B_y(i)}{B_y(i)}\Big)
= \sum_{i=1}^ta_i \geq 1-\epsilon.
\end{equation*}
Therefore all three conditions hold for $\{A_x\}$ and $\{B_y\}$,
implying that $r\leq t$, and that $\qcorr_{\epsilon}(\sigma)\geq
\lceil \log_2 r \rceil$.

For the other direction, given that matrices $\{A_x\}$'s and
$\{B_y\}$'s satisfy the requirements, it can be verified that
\begin{equation}\label{eq:forgeneral3}
\ket{\tilde{\psi}} = \sum_{i=1}^l \Big(\sum_{x} \ket{x} \otimes
\ket{A_x(i)}\Big) \otimes \Big(\sum_{y} \ket{y} \otimes
\ket{B_y(i)}\Big)
\end{equation}
is a purification of $\sigma$ in $\h_{A_1}\otimes \h_A\otimes \h_B
\otimes \h_{B_1}$. Again assuming the decreasing order of the
Schmidt coefficients and taking the $r$ leading terms,
\begin{equation}\label{eq:forgeneral4}
\ket{\tilde{\psi'}} = \sum_{i=1}^r \Big(\sum_{x} \ket{x} \otimes
\ket{A_x(i)}\Big) \otimes \Big(\sum_{y} \ket{y} \otimes
\ket{B_y(i)}\Big),
\end{equation}
then Eq.\eqref{eq:requireforgeneral3} means that
$|\qip{\tilde{\psi}}{\tilde{\psi'}}|\geq 1-\epsilon$. Since
$\srank(\ket{\tilde{\psi'}})\leq r$, it holds that
$\srank_{\epsilon}(\ket{\tilde{\psi}})\leq r$, and according to
Proposition \ref{lem:qcc1} we know that
$\qcorr_{\epsilon}(\sigma)\leq \lceil \log_2 r \rceil$, which
completes the proof.
\end{proof}

\section{Approximate Quantum Correlation Complexity of Multipartite Pure States}

In this section, we consider approximation in generating multipartite pure states, and prove the results in Section \ref{intro:5}. Recall that for a $k$-partite pure state $\ket{\psi}$, $r_i^{(\epsilon)} = \srank^{(A_i,A_{-i})}_{\epsilon}(\ket{\psi})$.

\begin{Thm-aQccforPure}[Restated]
Let $\ket{\psi}\in \bigotimes_{i=1}^k \h_{i}$ be a $k$-partite
state, $\epsilon>0$, and
%\begin{align*}
    $\m_{\epsilon}(\ket{\psi}) = \sum_{i=1}^k \big\lceil\log_2 r_{i}^{(\epsilon)}\big\rceil$.
%\end{align*}
Then
\begin{align*}
\m_{\epsilon}(\ket{\psi})\leq\qcorr_{\epsilon}^{pure}(\ket{\psi})\leq
\m_{\epsilon/k}(\ket{\psi}).
\end{align*}
\end{Thm-aQccforPure}
\begin{proof}
Lower bound: Suppose $\ket{\phi}$ is a pure state in
$\bigotimes_{i=1}^k \h_{A_i}$ such that
$\qcorr_\epsilon^{pure}(\ket{\psi}) = \qcorr(\ket{\phi})$ and
$\F(\ket{\psi}\bra{\psi}, \ket{\phi}\bra{\phi}) \geq 1 - \epsilon$.
Suppose $\sigma_i$ is the reduced density matrix of $\ket{\phi}$ in
Player $i$'s system, and its rank is $s_i$, which is also
$\srank^{(A_i,A_{-i})}_{\epsilon}(\ket{\psi})$. Then it holds that
$s_i\geq r_i^{(\epsilon)}$. According to Theorem
\ref{thm:qccforpure}, we have that
\begin{align*}
\qcorr_\epsilon^{pure}(\ket{\psi}) = \qcorr(\ket{\phi}) = \sum_{i=1}^K \big\lceil\log_2 s_i\big\rceil \geq
\m_\epsilon(\ket{\psi}).
\end{align*}

Upper bound: Lemma \ref{lem:qccforpure} shows that
$\ket{\psi}$ could be written as
\begin{align*}
  \ket{\psi} = \sum_{i_j \le r_j} a_{i_1 ... i_k} \ket{\lambda_{i_1}}\cdots \ket{\lambda_{i_k}}
\end{align*}
where $\ket{\lambda_{i_j}}$ is the $i_j$-th eigenvector of $\rho_j$.
Arrange $i_j$ in decreasing order of the eigenvalues of $\rho_j$.
Since $\ket{\psi}$ is a pure state,
\begin{align*}
   \sum_{i_1,...,i_k}|a_{i_1...i_k}|^2 = 1.
\end{align*}

According to the definition of $r_j^{(\epsilon)}$, we have that
\begin{align*}
   \sum_{i_j=r_j^{(\epsilon/k)}+1}^{r_j}\sum_{i_{-j}}|a_{i_1...i_k}|^2 \leq \epsilon/k,
\end{align*}
where we have used Lemma 5.1 of \cite{JSWZ13} and the fact that
\begin{align*}
   \bra{\lambda_{i_j}}\rho_j\ket{\lambda_{i_j}}=\sum_{i_{-j}}|a_{i_1...i_k}|^2.
\end{align*}
Thus,
\begin{align*}
\sum_{i_1=1}^{r_1^{(\epsilon/k)}} \cdots
\sum_{i_k=1}^{r_k^{(\epsilon/k)}} |a_{i_1...i_k}|^2 \geq & 1 -
\sum_{j=1}^k \sum_{i_j=r_j^{(\epsilon/k)}+1}^{r_j}\sum_{i_{-j}}
|a_{i_1...i_k}|^2
 \geq 1 - \epsilon.
\end{align*}
We now consider a pure state defined as
\begin{align*}
   \ket{\phi'}=\frac{1}{\sqrt{m}}\sum_{i_1=1}^{r_1^{(\epsilon/k)}} \cdots \sum_{i_k=1}^{r_k^{(\epsilon/k)}}  a_{i_1...i_k} \ket{\lambda_{i_1}}\cdots \ket{\lambda_{i_k}},
\end{align*}
where $m = \sum_{i_1=1}^{r_1^{(\epsilon/k)}} \cdots
\sum_{i_k=1}^{r_k^{(\epsilon/k)}} |a_{i_1...i_k}|^2$. It is not
difficult to prove that $\F(\ket{\psi}\bra{\psi},
\ket{\phi'}\bra{\phi'}) \geq \sqrt{1 - 2\epsilon}\approx1-\epsilon$.
Moreover, according to Theorem \ref{thm:qccforpure}, it holds that
$\qcorr(\ket{\phi'})\leq \m_{\epsilon/k}(\ket{\psi})$. According to
the definition of $\qcorr_{\epsilon}^{pure}(\ket{\psi})$, we obtain
that $\qcorr_{\epsilon}^{pure}(\ket{\psi})\leq
\m_{\epsilon/k}(\ket{\psi})$.
\end{proof}

From the proof we can see that the upper bound can be generalized to the following.

\begin{Thm}
Suppose
\begin{align*}
R = \min_{S_1,\ldots ,S_k}\{\prod_{i=1}^k |S_i|: S_i\subset [r_i], \sum_{i_1\in S_1, \ldots, i_k \in S_k} |a_{i_1...i_k}|^2 \geq 1 - \varepsilon\}.
\end{align*}
Then $\qcorr_{\epsilon}^{pure}(\ket{\psi})\leq \log_2{\lceil R\rceil}$.
\end{Thm}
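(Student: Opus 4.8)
The plan is to generalize the upper-bound argument from Theorem~\ref{thm:aQccforPure_intro} by replacing the ``prefix'' truncation (keeping the top $r_j^{(\epsilon/k)}$ eigenvectors on each coordinate) with an \emph{arbitrary} product index set $S_1\times\cdots\times S_k$ that captures at least $1-\varepsilon$ of the squared amplitude mass. First I would start from the decomposition furnished by Lemma~\ref{lem:qccforpure},
\begin{align*}
\ket{\psi} = \sum_{i_1\le r_1,\ldots,i_k\le r_k} a_{i_1\ldots i_k}\ket{\lambda_{i_1}}\cdots\ket{\lambda_{i_k}},
\end{align*}
where $\ket{\lambda_{i_j}}$ ranges over the eigenvectors of $\rho_j$ with nonzero eigenvalue. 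For any choice of subsets $S_i\subset[r_i]$ satisfying $\sum_{i_1\in S_1,\ldots,i_k\in S_k}|a_{i_1\ldots i_k}|^2\ge 1-\varepsilon$, I would define the truncated and renormalized state
\begin{align*}
\ket{\phi'} = \frac{1}{\sqrt{m}}\sum_{i_1\in S_1,\ldots,i_k\in S_k} a_{i_1\ldots i_k}\ket{\lambda_{i_1}}\cdots\ket{\lambda_{i_k}}, \qquad m = \sum_{i_1\in S_1,\ldots,i_k\in S_k}|a_{i_1\ldots i_k}|^2.
\end{align*}

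The key steps are then threefold. First, I would verify the fidelity bound $\F(\ket{\psi}\bra{\psi},\ket{\phi'}\bra{\phi'})\ge 1-\varepsilon$ (or the $\sqrt{1-2\varepsilon}$-type bound that appears in the original proof), which follows because $|\qip{\psi}{\phi'}|=\sqrt{m}\ge\sqrt{1-\varepsilon}$ by the mass condition, exactly as in the proof of Theorem~\ref{thm:aQccforPure_intro}. Second, I would bound the marginal ranks: the reduced density matrix of $\ket{\phi'}$ on Player $i$'s system is supported on the span of $\{\ket{\lambda_{i_j}}: i_j\in S_i\}$, so its rank is at most $|S_i|$. Third, invoking Theorem~\ref{thm:qccforpure}, the correlation complexity of the pure state $\ket{\phi'}$ is
\begin{align*}
\qcorr(\ket{\phi'}) = \sum_{i=1}^k\big\lceil\log_2\rank(\sigma_i)\big\rceil \le \sum_{i=1}^k\big\lceil\log_2|S_i|\big\rceil \le \Big\lceil\log_2\prod_{i=1}^k|S_i|\Big\rceil,
\end{align*}
using subadditivity of the ceiling over the logs of the $|S_i|$. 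Since $\ket{\phi'}$ is an admissible $\epsilon$-approximation of $\ket{\psi}$, minimizing over all valid $(S_1,\ldots,S_k)$ gives $\qcorr_\epsilon^{pure}(\ket{\psi})\le\lceil\log_2 R\rceil$.

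I do not anticipate a genuine obstacle here, since every ingredient is already established in the excerpt; the work is essentially bookkeeping. The one point deserving care is the fidelity estimate: I would confirm that the product structure of the index set (rather than the coordinate-wise prefix structure used before) does not affect the computation of $\qip{\psi}{\phi'}$, and it does not, because the eigenvectors $\ket{\lambda_{i_j}}$ are orthonormal on each factor, so the inner product between $\ket{\psi}$ and $\ket{\phi'}$ collapses to $\tfrac{1}{\sqrt m}\sum_{i_1\in S_1,\ldots,i_k\in S_k}|a_{i_1\ldots i_k}|^2 = \sqrt m$ regardless of the shape of the $S_i$. The other minor subtlety is the slightly loose statement $\qcorr_\epsilon^{pure}(\ket{\psi})\le\log_2\lceil R\rceil$ in the proposition: I would record the cleaner bound $\lceil\log_2 R\rceil$ coming from the argument above, which dominates it, and note that the original theorem is recovered by taking $S_i=\{1,\ldots,r_i^{(\epsilon/k)}\}$ together with the union-bound estimate on the discarded mass.
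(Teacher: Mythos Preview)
Your approach is exactly what the paper intends: it provides no separate proof, merely remarking that the upper-bound argument of Theorem~\ref{thm:aQccforPure_intro} generalizes from the prefix sets $\{1,\ldots,r_i^{(\epsilon/k)}\}$ to arbitrary product index sets $S_1\times\cdots\times S_k$, and your truncation--renormalization--fidelity--rank chain is precisely that generalization.

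One small correction is in order. The inequality
\[
\sum_{i=1}^k\big\lceil\log_2|S_i|\big\rceil \;\le\; \Big\lceil\log_2\prod_{i=1}^k|S_i|\Big\rceil
\]
that you invoke at the end goes the wrong way: ceilings are \emph{super}additive, $\lceil a\rceil+\lceil b\rceil\ge\lceil a+b\rceil$, so in fact the left side can exceed the right by as much as $k-1$ (e.g.\ $|S_1|=|S_2|=|S_3|=3$ gives $6$ versus $5$). What the construction honestly yields is $\qcorr(\ket{\phi'})\le\sum_i\lceil\log_2|S_i|\rceil$, and hence
\[
\qcorr_\epsilon^{pure}(\ket{\psi})\le \min_{S_1,\ldots,S_k}\Big\{\sum_{i=1}^k\lceil\log_2|S_i|\rceil : \sum_{i_j\in S_j}|a_{i_1\ldots i_k}|^2\ge 1-\varepsilon\Big\}.
\]
This is not a defect in your method---the paper's own stated bound has the same rounding looseness---but you should not present the step as written.
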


Finally, we consider the relationship between $\qcorr_{\epsilon}(\ket{\psi})$ and $\qcorr^{pure}_{\epsilon}(\ket{\psi})$.
\begin{Thm-PureandMixed}[Restated]
Let $\ket{\psi}\in \h_{A_1}\otimes\cdots\otimes \h_{A_k}$ be a pure state and $\epsilon >0$. Then
\begin{align*}
\frac{k}{2k-2}\qcorr^{pure}_{k\epsilon}(\ket{\psi})\leq\qcorr_{\epsilon}(\ket{\psi})\leq
\qcorr^{pure}_{\epsilon}(\ket{\psi}).
\end{align*}
\end{Thm-PureandMixed}
\begin{proof}
The second inequality $\qcorr_{\epsilon}(\ket{\psi})\leq
\qcorr^{pure}_{\epsilon}(\ket{\psi})$ holds by definition. For the
first inequality, according to the definition of
$\qcorr_{\epsilon}(\ket{\psi})$, there exists a $\rho \in
\h_{A_1}\otimes\cdots\otimes \h_{A_k}$ such that
$\qcorr_{\epsilon}(\ket{\psi})=\qcorr(\rho)$ and
$\F(\ket{\psi}\bra{\psi},\rho)\geq1-\epsilon$. Then Theorem
\ref{thm:qcc_intro} implies that there exists a purification
$\ket{\phi}\in \bigotimes_{i=1}^k (\h_{A_i}\otimes\h_{A_i'})$ of
$\rho$ such that \[\qcorr(\rho)\geq
\frac{k}{2k-2}\qcorr(\ket{\phi}).\] Thus, we could find a pure state
$\ket{\theta}\in\bigotimes_{i=1}^k \h_{A_i'}$ that makes
$\F(\ket{\phi}\bra{\phi},\ket{\phi'}\bra{\phi'})\geq1-\epsilon$,
where $\ket{\phi'}=\ket{\psi}\otimes\ket{\theta}$. By the definition
of $\qcorr^{pure}_{\epsilon}(\ket{\phi'})$, we have that
\[\qcorr(\ket{\phi})\geq\qcorr^{pure}_{\epsilon}(\ket{\phi'}).\]
Combining the above two inequalities, we see that
\[\qcorr(\rho)\geq\frac{k}{2k-2}\qcorr^{pure}_{\epsilon}(\ket{\phi'})\geq\frac{k}{2k-2}\m_\epsilon(\ket{\phi'}),\]
where the last inequality comes from Theorem
\ref{thm:aQccforPure_intro}. According to Lemma 5.2 of
\cite{JSWZ13}, we have that $\m_\epsilon(\ket{\phi'})\geq
\m_\epsilon(\ket{\psi})$. Applying Theorem
\ref{thm:aQccforPure_intro} again, we eventually get that
$\qcorr^{pure}_{\epsilon}(\ket{\phi'})\geq
\m_\epsilon(\ket{\psi})\geq\qcorr^{pure}_{k\epsilon}(\ket{\psi})$.
This means that
\begin{align*}
\qcorr_{\epsilon}(\ket{\psi})=\qcorr(\rho)\geq\frac{k}{2k-2}\qcorr^{pure}_{\epsilon}(\ket{\phi'})\geq\frac{k}{2k-2}\qcorr^{pure}_{k\epsilon}(\ket{\psi}),
\end{align*}
and the proof is completed.
\end{proof}

\subsection*{Acknowledgments}
This work is supported by the Singapore Ministry of Education Tier 3 Grant and the Core Grants of the Center for Quantum Technologies (CQT), Singapore. Z.W. is also supported in part by the Singapore National Research Foundation under NRF RF Award No. NRF-NRFF2013-13. S.Z. was supported by Research Grants Council of the Hong Kong S.A.R. (Project no. CUHK419011, CUHK419413), and part of the research was done when S.Z. visited CQT and Tsinghua University, latter partially supported by China Basic Research Grant 2011CBA00300 (sub-project 2011CBA00301).

\end{document}